\documentclass[11pt,letterpaper]{article}
\usepackage{amsmath,amsthm,nicefrac}
\usepackage{amsfonts, amstext}

\usepackage{comment}

\usepackage{typearea}
\typearea{14}
\usepackage[font={small,it}]{caption}

\usepackage{setspace}
\usepackage[compact]{titlesec}

\usepackage{enumitem}
\usepackage[breaklinks]{hyperref}
\hypersetup{colorlinks=true,%
            citebordercolor={.6 .6 .6},linkbordercolor={.6 .6 .6},%
citecolor=blue,urlcolor=black,linkcolor=blue}

\usepackage[nameinlink]{cleveref}
\Crefname{algocf}{Algorithm}{Algorithms}
\crefname{algocfline}{line}{lines}
\Crefname{invariant}{Invariant}{Invariants}
\Crefname{claim}{Claim}{Claims}
\Crefname{subclaim}{Subclaim}{Subclaims}

\usepackage{epsfig}
\usepackage{amsthm,amssymb}
\usepackage{amsthm,amssymb}

\usepackage{mathrsfs}
\usepackage{xspace}
\usepackage{soul}
\usepackage{latexsym}
\usepackage{bm}

\usepackage{framed}

\usepackage[dvipsnames]{xcolor}
\definecolor{DarkGray}{rgb}{0.66, 0.66, 0.66}
\definecolor{DarkPowderBlue}{rgb}{0.0, 0.2, 0.6}
\definecolor{fluorescentyellow}{rgb}{0.8, 1.0, 0.0}

\usepackage[ruled,vlined,linesnumbered,algonl]{algorithm2e}
\SetEndCharOfAlgoLine{}
\SetKwComment{Comment}{\footnotesize$\triangleright$\ }{}

\SetCommentSty{mycommfont}

\usepackage{thmtools,thm-restate}

\makeatletter
\allowdisplaybreaks
\makeatother

\newcounter{note}[section]

\sethlcolor{fluorescentyellow}

\newcommand{\initOneLiners}{%
    \setlength{\itemsep}{0pt}
    \setlength{\parsep }{0pt}
    \setlength{\topsep }{0pt}
}

\newtheorem{theorem}{Theorem}[section]
\newtheorem{lemma}[theorem]{Lemma}
\newtheorem{claim}[theorem]{Claim}

\newtheorem{corollary}[theorem]{Corollary}

\theoremstyle{definition}
\newtheorem{definition}[theorem]{Definition}

\theoremstyle{remark}
\newtheorem{remark}[theorem]{Remark}

\makeatletter 
\renewcommand{\theinvariant}{(I\@arabic\c@invariant)}
\makeatother

\newcommand{\eps}{\varepsilon}

\newcommand{\opt}{{\textsf{\textup{opt}}}\xspace}

\renewcommand{\emptyset}{\varnothing}

\newcommand{\E}{\mathbb{E}}

\newcommand{\junk}[1]{}
\newcommand{\eat}[1]{}

\newif\ifhideproofs

\ifhideproofs
\usepackage{environ}
\NewEnviron{hide}{}

\fi

\usepackage{graphicx} %
\usepackage{amsmath,amssymb,amsthm}
\usepackage{mathtools}
\usepackage{xspace}
\usepackage{fullpage}
\usepackage{mdframed}
\usepackage{hyperref}
\usepackage{enumitem}
\usepackage{booktabs,array}

\newcommand{\PP}{\mathbb{P}}  %
\newcommand{\MM}[1]{\mathrm{MM}\!\left(#1\right)}
\DeclareMathOperator{\Var}{Var}

\newcommand{\RS}{{{\rm RS}}}

\newcommand{\Gstar}{G^{\star}}
\newcommand{\Fstar}{F^{\star}}

\newcommand{\VC}[1]{\mathrm{VC}\!\left(#1\right)}
\newcommand{\MVC}[1]{\mathrm{MVC}\!\left(#1\right)}

\newcommand{\SEED}{\mathrm{SEED}}

\newcommand{\rest}{\mathrm{rest}}

\newcommand{\decided}[0]{\mathsf{decided}}
\newcommand{\undecided}[0]{\mathsf{undecided}}
\newcommand{\problematic}[0]{\mathsf{revealing}}

\newcommand{\OPT}{\mathrm{OPT}}

\newcommand{\abs}[1]{\left\lvert #1 \right\rvert}
\newcommand{\paren}[1]{\left( #1 \right)}

\newcommand{\vertexseedalg}{{\rm \textsc{Vertex-Seed}}\xspace}
\newcommand{\ouralg}{{\rm \textsc{Vertex-Cover}}\xspace}

\usepackage[dvipsnames]{xcolor}
\ifdefined\DEBUG

\newcommand{\ali}[1]{\textcolor{Red}{\sf{[Ali: #1]}}}
\newcommand{\inge}[1]{\textcolor{cyan}{\sf{[Inge: #1]}}}
\newcommand{\jan}[1]{\textcolor{red}{Jan: #1}}
\newcommand{\cliff}[1]{\textcolor{purple}{\sf{[Cliff: #1]}}}

\newcommand{\ola}[1]{\textcolor{Green}{Ola: #1}}

\fi

\newcommand{\thesymbol}{P}
\newcommand{\detsetvar}{\thesymbol}
\newcommand{\detset}{\hat{\thesymbol}}
\newcommand{\nonadapt}{\SEED_{\rm NA}}
\newcommand{\adapt}{\SEED_{\rm A}}

\newcommand{\mvcsetalg}{H}

\title{An Optimal Algorithm for Stochastic Vertex Cover}
\author{
Jan van den Brand\thanks{Georgia Institute of Technology. Email: \texttt{vdbrand@gatech.edu}}
\and
Inge Li G{\o}rtz\thanks{Technical University of Denmark. Email: \texttt{inge@dtu.dk}}
\and
Chirag Pabbaraju\thanks{Stanford University. Email: \texttt{cpabbara@stanford.edu.}}
\and
Debmalya Panigrahi\thanks{Duke University. Email: \texttt{debmalya@cs.duke.edu}}
\and
Clifford Stein\thanks{Columbia University. Email: \texttt{cliff@ieor.columbia.edu}}
\and
Miltiadis Stouras\thanks{EPFL. Email: \texttt{miltiadis.stouras@epfl.ch}}
\and
Ola Svensson\thanks{EPFL. Email: \texttt{ola.svensson@epfl.ch}}
\and
Ali Vakilian\thanks{Virginia Tech. Email: \texttt{vakilian@vt.edu}}
}
\date{}

\begin{document}

\maketitle

\thispagestyle{empty}

\begin{abstract}
    The goal in the stochastic vertex cover problem is to obtain an approximately minimum vertex cover for a graph $\Gstar$ that is realized by sampling each edge independently with some probability $p\in (0, 1]$ in a base graph $G = (V, E)$.
    The algorithm is given the base graph $G$ and the probability $p$ as inputs, but its only access to the realized graph $\Gstar$ is through queries on individual edges in $G$ that reveal the existence (or not) of the queried edge in $\Gstar$. In this paper, we resolve the central open question for this problem: to find a $(1+\eps)$-approximate vertex cover using only $O_\eps(n/p)$ edge queries. Prior to our work, there were two incomparable state-of-the-art results for this problem: a $(3/2+\eps)$-approximation using $O_\eps(n/p)$ queries (Derakhshan, Durvasula, and Haghtalab, 2023)  and a $(1+\eps)$-approximation using $O_\eps((n/p)\cdot \RS(n))$ queries (Derakhshan, Saneian, and Xun, 2025), where $\RS(n)$ is known to be at least $2^{\Omega\left(\frac{\log n}{\log \log n}\right)}$ and could be as large as $\frac{n}{2^{\Theta(\log^* n)}}$. Our improved upper bound of $O_{\eps}(n/p)$ matches the known lower bound of $\Omega(n/p)$ for any constant-factor approximation algorithm for this problem (Behnezhad, Blum, and Derakhshan, 2022). A key tool in our result is a new concentration bound for the size of minimum vertex cover on random graphs, which might be of independent interest.

\end{abstract}

\newpage

\setcounter{page}{1}

\maketitle

\section{Introduction}\label{sec:intro}In the {\em stochastic vertex cover} problem, we are given a {\em base graph} $G = (V, E)$ and a sampling probability $p\in (0, 1]$. The {\em realized graph} $\Gstar$ is a random graph generated by sampling each edge in $G$ independently with probability $p$. The algorithm does not have access to $\Gstar$ directly, but can query individual edges $e\in E$ to learn whether they appear in $\Gstar$. The goal of the algorithm is to output a near-optimal vertex cover of $\Gstar$ while querying as few edges in $G$ as possible.

The stochastic setting has been widely considered in graph algorithms. Perhaps the most extensive literature exists for the stochastic matching problem, where the goal is to query a sparse subgraph $H$ of the base graph $G$ such that the realized edges in $H$ contain an approximately maximum matching of the realized graph $\Gstar$. For this problem, the first result was obtained by 
Blum et~al.~\cite{blum2015ignorance}, %
who gave a $1/2$-approximation algorithm using $n\cdot \mathrm{poly}(1/p)$ queries. This result has subsequently been improved in an extensive line of work~\cite{assadi2016stochastic,assadi2017stochastic,behnezhad2018almost,behnezhad2019stochastic,assadi2019towards,behnezhad2020weighted,behnezhad2020stochastic,derakhshan2025matching}, eventually culminating in an (almost tight) result that gives a $(1-\eps)$-approximation using $n\cdot \mathrm{poly}(1/p)$ queries, for any fixed small $\eps>0$~\cite{azarmehr2025stochastic}.
Besides the maximum matching problem, stochastic optimization on graphs has also been considered for other classical problems such as minimum spanning tree and shortest paths~\cite{bertsekas1991analysis,goemans2006covering,vondrak2007shortest,blum2013harnessing}, as well as for more general frameworks in combinatorial optimization such as covering and packing problems~\cite{dean2008approximating,asadpour2008stochastic,golovin2011adaptive,bhalgat2011improved,yamaguchi2018stochastic}. 

In this paper, we consider the stochastic vertex cover problem. 
This problem was introduced by Behnezhad, Blum, and Derakhshan~\cite{behnezhad2022stochastic}, who gave, for any $\eps > 0$, a $(2+\eps)$-approximation (polynomial-time) algorithm using $O\left(\frac{n}{\eps^3 p}\right)$ queries. They also showed a simple lower bound that $\Omega(n/p)$ queries are necessary for any constant-factor approximation. The latter result is information-theoretic and rules out algorithms with fewer queries, even allowing an arbitrarily large running time. These results raised the question of whether there exist (exponential-time) algorithms that obtain a better-than-$2$ approximation, while still querying only $O(n/p)$ edges. This question was answered in the affirmative by Derakhshan, Durvasula, and Haghtalab~\cite{derakhshan2023stochastic}, who obtained an approximation factor of $\left(\frac{3}{2}+\eps\right)$, while querying $O\left(\frac{n}{\eps p}\right)$ edges. This result, which helped delineate the information complexity of the problem by breaching the polynomial-time solvability barrier, led to the natural question: can we obtain a near-optimal $(1+\eps)$-approximation algorithm using $O_\eps(n/p)$ queries? 

Interestingly, Behnezhad, Blum, and Derakshan had previously addressed this question for {\em bipartite} graphs: they obtained an $(1+\eps)$-approximation using $O_{\eps, p}(n)$ queries, although the dependence on $1/p$ and $1/\eps$ were triple-exponential~\cite{behnezhad2022stochastic}. 
The first algorithm to obtain a $(1+\eps)$-approximation for general graphs was obtained recently by Derakhshan, Saneian, and Xun~\cite{derakhshan2025query}, but their algorithm uses (super-linear) $O\left(\frac{n}{p}\cdot \RS(n)\right)$ queries. Here, $\RS$ refers to Ruzsa-Szemer\'edi Graphs, and $\RS(n)$ is the largest $\beta$ such that there exists an $n$-vertex graph whose edges can be partitioned into $\beta$ induced, edge-disjoint matchings of size $\Theta(n)$. 
The value of $\RS(n)$ is known to be at least $2^{\Omega\left(\frac{\log n}{\log \log n}\right)}$ and could be as large as $\frac{n}{2^{\Theta(\log^* n)}}$.

Although the number of queries in this last result is super-linear, it applies to a more general setting. Previously, \cite{derakhshan2023stochastic} had observed that under a regime permitting ``mild'' correlation between edge realizations, surpassing the $\frac{3}{2}$ factor requires $\Omega\left(n\cdot \mathrm{RS}(n)\right)$ queries. Since the result of \cite{derakhshan2025query} applies to this regime as well, their bound is tight in this mildly correlated setting. This left open the question of whether one can obtain a tight bound of $O_\eps(n/p)$ in the original setting of independent edge sampling, or whether this dependence on the parameter $\RS(n)$ was fundamental to the problem even with full independence.

\subsection{Our Result}

In this paper, we give a $(1+\eps)$-approximation algorithm for the stochastic vertex cover problem using $O_\eps(n/p)$ queries, for any small $\eps > 0$. By the lower bound of \cite{behnezhad2022stochastic}, our algorithm is {\em optimal}, up to the dependence on $\eps$ which is $1/\eps^5$. To the best of our knowledge, the only previously known $(1+\eps)$-approximation with linearly many (in $n$) queries was for bipartite graphs, but a linear dependence on $1/p$ (or polynomial dependence on $1/\eps$) was not known even in this special case. 
Our result also shows that the dependence on $\RS(n)$ in \cite{derakhshan2025query} was an artifact of the correlations between edges, and in this sense, is not fundamental to the stochastic vertex cover problem itself.

\begin{theorem}\label{thm:main}
    For any $\eps \in (0, c)$, where $c > 0$ is a small enough constant, there is a deterministic algorithm for the stochastic vertex cover problem that achieves an approximation factor of $1+\eps$ using $O\left(\frac{n}{\eps^5 p}\right)$ edge queries.
\end{theorem}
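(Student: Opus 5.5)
We propose to follow the now-standard ``seed'' template of earlier work: in a non-adaptive phase, identify a set $\detset \subseteq V$ of vertices that with high probability must be in any near-optimal cover, add them to the output without queries, and then query every edge of $G[V\setminus \detset]$. The output is $\detset$ together with a minimum vertex cover of the realized edges of $G[V\setminus\detset]$; this is allowed to take exponential time, which is why the algorithm is deterministic but not polynomial. The proof then has two parts. First, a \emph{query bound}: $G[V\setminus\detset]$ has $O(n/(\eps^5 p))$ edges. Second, an \emph{approximation bound}: $\E|\detset| + \E[\MVC{\Gstar[V\setminus\detset]}] \le (1+\eps)\E[\MVC{\Gstar}]$.

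To build $\detset$, we would use the procedure $\vertexseedalg$ suggested by the macros. We draw $t=\Theta(1/\eps)$ independent simulated realizations $G_1,\dots,G_t$ of $G$ with edge probability $p$; these are sampled internally and involve no queries. For each one we compute $\MVC{G_i}$ with a canonical deterministic tie-break, and we put into $\detset$ the vertices that lie in the cover with ``probability'' at least $1-\eps$. To make the algorithm deterministic we replace sampling by exact computation of the probabilities $q_v=\PP[v\in \MVC{\Gstar}]$, which is again exponential time but free of queries. Set $\detset=\{v: q_v\ge 1-\eps\}$.

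The query bound comes from the uncertain vertices: every vertex of $V\setminus\detset$ has $q_v<1-\eps$. We want to show that the subgraph on these vertices has few edges. If $v$ has degree $d\gg 1/(\eps p)$ inside $V\setminus\detset$, then with probability about $1-e^{-pd}$ it has many realized neighbours. Each such neighbour $u$ lies outside the cover with probability at least $\eps$, and whenever $u$ is outside the cover, $v$ must be covered. A charging argument of the DDH type, applied to an appropriate iterated version of this reasoning, then bounds the number of edges by $O(n/(\eps^{c}p))$. The powers of $\eps$ come from nesting thresholds.

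The approximation bound is where we expect the main obstacle, and it is where the new concentration bound for $\MVC{}$ on random graphs enters. The difficulty is that conditioning on the realized edges inside $V\setminus\detset$ correlates with the rest of the graph. We would use this concentration bound, stating that $\MVC{\Gstar}$ lies within $\eps\,\E[\MVC{\Gstar}]+O(1/p)$-type additive deviations of its mean, to argue three things. First, $\E|\detset\setminus \MVC{\Gstar}|\le \eps\,\E|\detset|$, which follows directly from the threshold by linearity. Second, restricting the optimal cover to $V\setminus\detset$ gives a feasible cover of $\Gstar[V\setminus\detset]$, so the second term is at most $\E|\MVC{\Gstar}\setminus\detset|$. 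Combining these yields $(1+O(\eps))\,\E[\MVC{\Gstar}]$, and rescaling $\eps$ finishes the proof.

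In this form the step that genuinely needs concentration is handling a canonical tie-broken cover. Without it, the probabilities $q_v$ do not give the clean structure needed for the query charging. Here we would invoke the concentration lemma to show that near-optimal covers are stable, which lets the degree-charging argument go through with losses of only $\mathrm{poly}(1/\eps)$.
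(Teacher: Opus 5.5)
There is a genuine gap, and it is in the step you treat as routine. With $\detset=\{v: q_v\ge 1-\eps\}$ the query bound is simply false. Take $K_{d,d}$ on $X\cup Y$ and attach a private pendant vertex to every $v\in X\cup Y$, so $n=4d$. Let $p\le 1/4$ and let $pd$ be large compared with $\log d$. One checks that, with high probability, every minimum vertex cover of $\Gstar$ contains all of $X$ or all of $Y$, namely the side carrying more realized pendant edges. By symmetry each side wins with probability about $1/2$, whatever tie-breaking you fix. Hence $q_v\le 1/2+p+o(1)<1-\eps$ on $X\cup Y$, so your $\detset$ contains none of these vertices. You then query all $d^2=\Theta(n^2)$ edges of $K_{d,d}$, which exceeds $O(n/(\eps^5p))$ once $d\gg 1/(\eps^5p)$. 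Yet the non-adaptive choice $\detsetvar=X$ queries only the $d$ pendant edges at $Y$ and costs $d+pd=(1+o(1))\opt$. So thresholding is not merely hard to analyze; it is the wrong selection rule. The DDH charging you appeal to only counts edges with $c_e\le 1-\eps$, whereas here every edge of $K_{d,d}$ has $c_e\approx 1$. Your heuristic that each neighbour lies outside the cover with probability at least $\eps$ fails because these events are perfectly correlated: all of $Y$ leaves the cover at once. The same example refutes the ``stability of near-optimal covers'' you hope to extract from concentration, since the two near-optimal covers are essentially complementary on $X\cup Y$.

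For a threshold set the approximation bound needs no concentration at all; it is the argument of Claim~\ref{cl:L}. So your proposal locates the difficulty in the wrong place. The paper's algorithm instead chooses $\detset$ by minimizing $|\detsetvar|+\E[|\MVC{\Gstar[V\setminus \detsetvar]}|]$ subject to the edge budget. The query bound then holds by construction, and all the work is in showing that some budget-feasible set is near-optimal. For this the paper builds an adaptive surrogate $\SEED(\Gstar)=L\cup Q\cup\decided(Q,\Gstar[M])\cup A$, where $Q$ is the output of \vertexseedalg on the moderate vertices $M$. In the example above, a vertex of $Y$ qualifies for $Q$, since with probability about $1/2$ it is outside the cover while all of $X$ is still undecided. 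Every such set leaves at most $O(n/(\eps^5p))$ edges and contains only $O(\eps)\opt$ non-cover vertices in expectation, and $|Q|\le\eps^2\opt/100$. Concentration (Theorem~\ref{thm:MVC-concentration}) is used only to de-adaptivize. After fixing the realization of the edges at $Q$, $\SEED$ ranges over at most $2^{|Q|}$ sets, and a union bound shows the best fixed one is within a factor $1+O(\eps)$ of the adaptive choice. Your proposal has no counterpart to either the sparse adaptive surrogate or this de-adaptivization step, and these are exactly the ideas the theorem needs.
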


Similar to all previous algorithms for stochastic vertex cover, our approximation factor is with respect to the {\em expected} size of the minimum vertex cover in the realized graph $\Gstar$, and the set of edges queried by our algorithm is {\em non-adaptive}, i.e., independent of the realized graph $\Gstar$.

\subsection{Our Techniques}

\paragraph{Description of the algorithm.} Our algorithm is simple, and in some sense, canonical among non-adaptive algorithms. Since the set of edges queried by the algorithm is non-adaptive, the output must deterministically include a valid vertex cover on the remaining (non-queried) edges in $G$ to ensure feasibility. Call this deterministic vertex set $\detsetvar$. Since $\detsetvar$ is always part of the output, it is wasteful to query any edge incident to $\detsetvar$. So, we query the edges that are not incident to $\detsetvar$, i.e., that are in the induced graph $G[V\setminus \detsetvar]$. These queries reveal the realized graph $\Gstar[V\setminus \detsetvar]$; we compute its minimum vertex cover and add these vertices to $\detsetvar$ in the output. 
Finally, to choose $\detsetvar$ optimally, we solve an optimization problem that minimizes the expected size of the vertex cover output by the above algorithm, under the constraint that the number of edges in $G[V\setminus \detsetvar]$ is at most $O_\eps(n/p)$, our desired query bound. This optimal choice of $\detsetvar$ is denoted $\detset$. We call this the \ouralg algorithm and formally describe it in Section~\ref{sec:ouralg-description}.

\paragraph{An adaptive algorithm as an analysis tool.}
While our algorithm is simple, its analysis is quite subtle. 
In the rest of this section, we give an outline of the main ideas in the analysis. Since our algorithm is defined via an optimization problem, it is difficult to directly compare its cost to \opt, the expected size of an optimal vertex cover. Instead, we first define a surrogate algorithm that {\em adaptively} chooses a set $\SEED(\Gstar)$ such that $G[V\setminus \SEED(\Gstar)]$ has $O_\eps(n/p)$ edges. Later, we will compare this adaptive strategy to our non-adaptive \ouralg algorithm. The set $\SEED(\Gstar)$ has two parts: a non-adaptive part $\nonadapt$ and an adaptive part $\adapt(\Gstar)$. First, we describe the choice of $\nonadapt$. Let $\MVC{H}$ denote a minimum vertex cover of any graph $H$. 
We partition the vertices into three groups, $L$, $M$ and $S$, according to their probability
to appear in $\MVC{\Gstar}$. Vertices in $L$ have ``large'' probability (at least $1-2\eps$),
those in $M$ have ``moderate'' probability (between $\eps$ and $1-2\eps$) and the ones in $S$
have ``small'' probability (at most $\eps$).
Observe that we can safely include all vertices in $L$ to $\nonadapt$. That is because
$(1-2\eps)\cdot |L| \ge \opt$, from which it follows that $\E[|L \setminus \MVC{\Gstar}|] \le O(\eps)\cdot \opt$. Next, we can discard all vertices in $S$; these vertices will not appear in $\SEED(\Gstar)$ for any $\Gstar$. Using the fact that the vertices in $S$ are infrequent in the optimal solution, we show that there are only $O_\eps(n/p)$ edges that are entirely within $S$ or between $S$ and $M$, therefore these edges can be queried (notice that the edges between $S$ and $L$ will be covered by vertices in $L$). This allows us to focus on the subgraph $G[M]$ in the remaining discussion.

\paragraph{Deciding the adaptive part of $\SEED(\Gstar)$.}
The remaining vertices, namely the set $M$, appear in $\MVC{\Gstar}$ with probabilities ranging from $\eps$ to $1-2\eps$. We cannot afford to add all vertices in $M$ to $\nonadapt$, but we cannot discard all these vertices either. Instead, we use a greedy algorithm to select vertices in $M$ to add to $\nonadapt$. A natural strategy would be to choose vertices with high degree in $G$. Indeed, if the degree of a vertex is at most $O_\eps(1/p)$, we can query all its incident edges, and hence, it is redundant to add such a vertex to $\SEED(\Gstar)$. But, in general, high-degree vertices in $M$ may only appear in $\MVC{\Gstar}$ with probability $\eps$, and as such, they might be numerous compared to $\opt$. Since we cannot afford to add all these vertices to $\nonadapt$, we ask: which high-degree vertices should we prefer?

The answer to the above question lies at the heart of our analysis. Observe that if a high-degree vertex $v$ is {\em not} in $\MVC{\Gstar}$, then {\em all the neighbors of $v$} must be in $\MVC{\Gstar}$. Using this observation, we define a deterministic procedure (called the \vertexseedalg algorithm) that outputs a sequence of vertices $Q$, where the $i$-th vertex $v_i$ has the following property: with probability at least some constant $\delta$ over the choice of $\Gstar$, the vertex $v_i$ is not in $\MVC{\Gstar}$ {\em and} has a large neighborhood in $G$ among vertices whose status (whether in $\MVC{\Gstar}$ or not) has not been ``decided'' by previous vertices in $Q$. The intuition is that such a vertex {\em reveals} a large number of previously undecided vertices to be in $\MVC{\Gstar}$, and therefore, allows $|Q|$ to be bounded against $\opt$. We add the vertices in $Q$ to $\nonadapt$, and show that the expected size of $|Q \setminus \MVC{\Gstar}|$ is at most $O(\eps^2)\cdot \opt$ (a bound of $O(\eps)\cdot \opt$ would suffice for now, but later, we will need the sharper bound of $O(\eps^2)\cdot \opt$). Furthermore, we add the neighbors of vertices in $Q\setminus \MVC{\Gstar}$ to the adaptive set $\adapt(\Gstar)$; since these vertices must be in $\MVC{\Gstar}$, this does not affect the approximation bound. 

Finally, we consider the vertices $A$ that reveal a large number of previously undecided vertices to be in $\MVC{\Gstar}$ for a specific realization $\Gstar$, but are not in $Q$ because they do not meet the probability threshold $\delta$ over the different realizations of $\Gstar$. We add the vertices in $A$ to the adaptive set $\adapt(\Gstar)$ as well, and show that the expected size of $A\setminus \MVC{\Gstar}$ is also $O(\eps)\cdot \opt$. This completes the description of the set $\SEED(\Gstar)$. This last step ensures that the vertices outside $\SEED(\Gstar)$ each have only $O_{\eps}(1/p)$ edges in $G[M]$ that need to be queried.

\paragraph{Using $\SEED(G^*)$ to analyze our algorithm.} So far, we have described the adaptive set $\SEED(\Gstar)$, and outlined intuition for two facts: (1) that $\SEED(\Gstar)$ contains at most $(1+O(\eps))\cdot \opt$ vertices in expectation, and (2) that there are at most $O_\eps(n/p)$ edges in $G$ that are not covered by $\SEED(\Gstar)$. We remark that although $\SEED(\Gstar)$ satisfies the conditions of the optimization problem in the \ouralg algorithm, it does not immediately give an adaptive algorithm with $O_\eps(n/p)$ queries. This is because the computation of the set $\SEED(\Gstar)$ can require more than $O_\eps(n/p)$ queries. So, the reader should view the definition of $\SEED(\Gstar)$ strictly as an analysis tool, and not an alternative adaptive algorithm. Intuitively, we want to compare $\SEED(\Gstar)$ to $\detset$, the vertices chosen non-adaptively in the \ouralg algorithm. If $\SEED(\Gstar)$ were non-adaptive, this comparison is immediate since it would be a valid choice of $S$ in the optimization problem defining the \ouralg algorithm. But, in general, $\SEED(\Gstar)$ can vary based on the realization of $\Gstar$, and therefore, the expected size of $\SEED(\Gstar)$ might be smaller than $|\detset|$.

\paragraph{Dealing with the adaptivity of $\SEED(\Gstar).$}
Note that $\SEED(\Gstar)$ contains two parts: a non-adaptive set $\nonadapt$ and an adaptive set $\adapt(\Gstar)$. The set $\adapt(\Gstar)$ only depends on two random quantities: the identity of the set $Q\cap \MVC{\Gstar}$ and the realizations of edges incident to vertices in $Q\setminus \MVC{\Gstar}$. 
Importantly, our choice to include $Q$ in $\nonadapt$ makes $\SEED(\Gstar)$'s extension to a valid vertex cover (i.e. $\MVC{\Gstar[V\setminus\SEED(\Gstar)]}$) independent of the realizations of the edges incident
to $Q\cap\MVC{\Gstar}$. This allows us to fix the realization of these edges, and analyze $\SEED(\Gstar)$ over the remaining randomness in $\Gstar$. This limits the adaptivity of $\SEED(\Gstar)$, as it now only depends on the set $Q\cap\MVC{\Gstar}$ which can take at most $2^{|Q|} = 2^{O(\eps^2)\cdot \opt}$ values.
In addition, we show that the size of the $\SEED(\Gstar)$'s extension, namely $|\MVC{\Gstar[V\setminus\SEED(\Gstar)]}|$, sharply concentrates. We do so by proving a general theorem on the convergence of $|\MVC{\Gstar}|$ on a randomly generated graph $\Gstar\sim G_p$:

\begin{restatable}{theorem}{concentration}\label{thm:MVC-concentration}
    Let $Z = |\MVC{\Gstar}|$ and $\opt = \mathbb{E}_{\Gstar\sim G_p}[\abs{\MVC{\Gstar}}]$. 
    Then for any $t \geq 0$,
    \begin{equation} \label{eq:concentration_bernstein}
        \Pr[\abs{Z - \opt} \ge t] \leq 2 \exp\paren{-\frac{t^2}{4C\cdot\opt + 2t/3}},
    \end{equation}
    where $C < 8$ is a constant.
\end{restatable}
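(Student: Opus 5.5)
We will prove \Cref{thm:MVC-concentration} via a Bernstein-type bound for a Doob martingale, controlling the predictable quadratic variation in terms of $\opt$ rather than the number of edges. The natural route is an Efron--Stein / entropy-method inequality for self-bounding functions. The catch is that $Z=|\MVC{\Gstar}|$ is a function of $|E|$ independent edge indicators, so the variance proxy must not depend on $|E|$.

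\textbf{Step 1 (one-edge sensitivity).} Write $Z=f(X_e:e\in E)$ with $X_e\sim\mathrm{Bern}(p)$. Adding or removing one edge changes the minimum vertex cover size by at most $1$, so $f$ is $1$-Lipschitz in each coordinate and monotone nondecreasing. Let $Z^{(e)}$ be $Z$ with $X_e$ set to $0$. Then $Z-Z^{(e)}\in\{0,1\}$, and it equals $1$ only if $X_e=1$ and $e$ is critical, meaning every minimum vertex cover of $\Gstar-e$ leaves $e$ uncovered.

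\textbf{Step 2 (few critical edges).} Bound $\sum_e (Z-Z^{(e)})$ by a multiple of $Z$. Fix a minimum vertex cover $C^\star$ of $\Gstar$ and let $\mathcal{C}$ be the set of critical realized edges. Every edge of $\mathcal{C}$ has an endpoint in $C^\star$. The subgraph of critical edges has a structural property: a graph in which every edge is critical for the vertex cover number has a perfect-matching-like structure (a König--Egerváry / Gallai-type argument). One can show that $|\mathcal{C}|\le c\,Z$ fails in general: a star has all edges critical only if it has one edge. So the target is instead the weaker self-bounding property needed for upper and lower tails.

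The approach I would commit to is to use $\sum_e \mathbb{E}[(Z-Z^{(e)})^2\mid\cdot\,]$ only through the reverse direction. For each realized edge $e$, removing it drops $Z$ only if $e$ is critical. Critical realized edges form a graph $H$ with $\tau(H)\le Z$, and every edge of $H$ is $\alpha$-critical in $\Gstar$. By Hajnal's theorem on $\tau$-critical graphs (the analogue of Gallai's bound for $\alpha$-critical graphs), such a graph satisfies $|E(H)|\le \binom{\tau+1}{2}$, which is not linear. So a purely combinatorial bound is insufficient, and I would bring in the randomness of $p$.

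\textbf{Step 3 (expectation bound).} Use $\sum_e \Pr[X_e=1,\ e\text{ critical}]$. For each edge $e=(u,v)$, criticality implies that for $\Gstar-e$ there is a minimum cover avoiding $u$ and another avoiding $v$, while in $\Gstar$ one of them is in every cover. Resampling $X_e$ costs a factor $p/(1-p)$ relative to the event that $e$ is absent. I would bound $\sum_e p\Pr[e\text{ pivotal}]$ by $C\cdot\opt$ via a fractional argument. Take an optimal fractional-like charging: each pivotal edge charges $1/\deg$ to an endpoint in the cover, so that the total charge per cover vertex is $O(1)$ in expectation. This gives $\mathbb{E}[V_+]\le C\,\mathbb{E}Z$ with $C<8$.

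\textbf{Step 4 (conclude).} Feed $\mathbb{E}[V]\le C\opt$ into the Bernstein inequality for martingales with bounded increments $|D_i|\le1$. Order the edges, let $D_i=\mathbb{E}[Z\mid X_1..X_i]-\mathbb{E}[Z\mid X_1..X_{i-1}]$, and note $\sum_i\mathbb{E}[D_i^2\mid\mathcal F_{i-1}]\le \sum_e p(1-p)\,\mathbb{E}[\text{pivotal}_e\mid\mathcal F_{i-1}]$. Freedman's inequality then yields \eqref{eq:concentration_bernstein} with constant $2C$, after handling the fact that the variance proxy is random. To handle that, I would use an entropy-method (self-bounding with $W=\sum_e$ pivotal, $W\le C Z$ pointwise if Step 3 can be made pointwise) to get the clean form $t^2/(4C\opt+2t/3)$.

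The main obstacle is Steps 2--3: obtaining a variance proxy linear in $\opt$ rather than in the number of edges or $\opt^2$. I expect that this requires a clever charging of pivotal edges to cover vertices and using $p$.
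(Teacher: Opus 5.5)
There is a genuine gap: Steps 2--3 never produce the bound the whole argument rests on, namely $W=\sum_i \Var(D_i\mid X_1,\dots,X_{i-1})\le\sigma^2=O(\opt)$ \emph{almost surely}. Freedman's inequality, as used here, needs $W\le\sigma^2$ always, not in expectation. In the pure edge-exposure martingale you have $W=\sum_i p(1-p)\,\Pr[e_i\text{ pivotal}\mid X_1,\dots,X_{i-1}]^2$. Your charging of pivotal edges (``$1/\deg$ to a cover endpoint, $O(1)$ per cover vertex in expectation'') is asserted, not proved, and even if it held it would only control $\E[W]$. The fallback of making it pointwise through a self-bounding property is false in edge coordinates. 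As you observe yourself, a realized $K_{t+1}$ has vertex cover number $\tau=t$ while all $\binom{t+1}{2}$ of its edges are critical, so $\sum_e (Z-Z^{(e)})\le cZ$ fails. So Step 4 has nothing valid to feed into Freedman, and the proof stops exactly at its central step.

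The missing idea is to avoid exposing the edges at high-degree vertices one at a time. The paper first proves a structural lemma: there is a fixed set $S$ with $|S|\le C\,\opt$ such that $G[V\setminus S]$ has at most $C\,\opt/p$ edges, with $C<8$. It is obtained by running a randomized greedy matching along a vertex order $\pi$, built so that each next vertex maximizes its probability of being matched backwards. This gives $P(u)\ge\max\{p(u),(1-2p(u))(1-e^{-p\delta^+(u)})\}$, so both the vertices with $p\,\delta^+(u)\ge 1$ and the forward edges of the remaining vertices can be charged to $\E|M|\le\opt$.

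With $S$ in hand, the paper uses a hybrid Doob martingale. First it exposes the edges of $G[V\setminus S]$ one by one; each step has conditional variance at most $p(1-p)$, for a total of at most $C\,\opt$. Then it exposes the whole neighbourhood of each $v\in S$ in a single step; each increment is bounded by $1$, so the variance is at most $1$ per step and the total is at most $|S|\le C\,\opt$. This gives $R=1$ and $W\le 2C\,\opt$ deterministically, and Freedman yields the stated bound with no pivotality analysis at all.

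Alternatively, your self-bounding instinct can be rescued by switching to vertex coordinates, with one coordinate per vertex recording its edges to earlier vertices. An edge-minimal subgraph with $\tau\ge Z$ has at most $2Z$ vertices by Erd\H{o}s--Gallai, so $Z$ is $(2,0)$-self-bounding in these coordinates, and the entropy method then gives a Bernstein-type tail. This is close to the Talagrand-based route sketched in the paper's remark.
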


Note that this theorem is not specific to our construction and establishes a general 
concentration result for minimum vertex cover. The tail bound proven is much sharper than
what one can obtain from standard techniques (e.g., vertex exposure martingales) and we believe it 
might be of independent combinatorial interest.\footnote{A strengthened version of Talagrand's inequality for $c$-Lipschitz, $s$-certifiable functions also yields a slightly weaker concentration bound, which still suffices to derive our results; see~\Cref{rem:talagrand-mvc}.} 

Finally, we use a union bound over the $2^{O(\eps^2)\cdot \opt}$ different realizations of $Q\cap\MVC{\Gstar}$, for each of which the tail bound on the size of $\MVC{\Gstar[V\setminus \nonadapt]}$ applies. Using this, we can now claim that the advantage of adaptivity in defining $\SEED(\Gstar)$ is negligible. Formally, we show that, for any fixed realization of the edges incident to $Q$, the set $\SEED(\Gstar)$ can take at most $2^{O(\eps)\cdot \opt}$ different forms
for the graphs $\Gstar$ that are consistent with the fixed realization of edges incident to $Q$. Out of those forms,
the one that minimizes the expected size of $\SEED(\Gstar) \cup \MVC{\Gstar[V\setminus \SEED(\Gstar)]}$ is at most $(1+O(\eps))$ worse than adaptively selecting the best of these forms
for each graph $\Gstar$. Since the latter holds for any realization of the edges incident to $Q$,
averaging over their randomness gives us that there exists a deterministic set $\detsetvar$ that
produces a solution of expected size $(1+O(\eps))\opt$. This, in turn, establishes that our algorithm, which optimizes over all non-adaptive choices of $\detsetvar$, is a $(1+O(\eps))$-approximation algorithm.

\eat{

Introduce intuition 
\begin{itemize}
\item      Simple algorithm that queries everything which works if number of edges are $O(n/p)$.

\item Problem if more edges but then we must have high degree vertices, i.e., of degree $\gg 1/p$. The naive way to solve this would be to always include all high-degree vertices in our vertex cover. This is a problem as our vertex cover may now be much bigger than the optimal one. This happens if there are high-degree vertices that almost always in the vertex cover, i.e., are in the independent set with some constant probability $\delta$. However, if that is the case, we make a lot of ``progress'' in that this vertex "decides" a lot of other vertices to be in the vertex cover. This is the intuition of the Vertex-Seed Algorithm where it gets technical because of correlation etc...%
\item 
\ali{a short description of prior work’s techniques} 
\cite{behnezhad2020stochastic} proposed an algorithm in which they query a set of edges $Q$ and then compute 
\cliff{note: this is where we need to define MVC[]} $\MVC{[Q^\star \cup (E\setminus Q)]}$, where $Q^\star$ denotes the realization of edges in $Q$. 
The set $Q$ is a carefully chosen set of edges that guarantees properties needed to approximate the maximum matching size in graph  $G[Q^\star \cup (E\setminus Q)]$.

\cite{derakhshan2023stochastic}, their technique follows essentially the same algorithmic template as ours. The difference lies in the optimization objective; namely, how to choose $S$. \cliff{Note: this is the first use of $S$}They analyze two $2$-approximation rules: (i) sample $\tilde{G} \sim G_p$ (hallucinate a graph $\tilde{G}$ via the same stochastic process) and set $S = \MVC{\tilde G}$; (ii) for each vertex $v$, let $c_v = \Pr_{G \sim G_p}\big[v \in \MVC{G}\big]$, and set $S = \{ v \mid c_v > 1/2 \}$. 

In their recent work,~\cite{derakhshan2025query} follow a slightly different approach: they first query a set of edges $Q$ and then run the same template on $H = G(V, E \setminus Q)$, where
\[
    Q = \bigl\{e=(u,v) \mid \Pr_{\Gstar \sim G_p}\big[ u \in \MVC{\Gstar} \text{ or } v \in \MVC{\Gstar}\big] \le 1 - \tau \bigr\},
\]
for a carefully chosen parameter $\tau$.

\item \cliff{Should we make the point about  how we go from additive to multiplicative approximation here}\ola{Right we should "sell" the new concentration result that I think is new by talking to colleagues in combinatorics. }
\end{itemize}

Selling points
\begin{itemize}
    \item tight
    \item Simple algorithm (more complex analysis)
    \item The analysis involves a very sharp concentration bound on the size of a vertex cover in a random graph (going beyond what is obtained by the more standard technique of Dobb's vertex revealing Martingale). 
\end{itemize}

\begin{table}[t]
\centering
\renewcommand{\arraystretch}{1.5}
\begin{tabular}{@{} c c l @{}}
\toprule
\textbf{Approximation factor} & \textbf{\# queries} & \textbf{Notes} \\
\midrule
$(2+\eps)$ & $O\bigl(\frac{n}{\eps^{3}p}\bigr)$  &
\cite{behnezhad2022stochastic} \\

$\bigl(\frac{3}{2}+\eps\bigr)$ & $O\bigl(\tfrac{n}{\eps p}\bigr)$ & 
 {\small Allowing $O(n)$ ``mildly'' correlated edges}~\cite{derakhshan2023stochastic}.\\

$(1+\eps)$ & $O\bigl(\frac{n}{p}\cdot \mathrm{RS}(n)\bigr)$ & 
\cite{derakhshan2025query} \\

$\bigl(\frac{3}{2}-\eps\bigr)$ & $\Omega\bigl(\frac{n}{\eps p} \bigr)$ & 
 {\small Allowing $O(n)$ ``mildly'' correlated edges}~\cite{derakhshan2023stochastic}. \\

\midrule
\midrule
$(1+\eps)$ & $O\bigl(\frac{n}{\eps^5 p}\bigr)$ &
{\bf This work} \\

\bottomrule
\end{tabular}
\caption{Results for \emph{stochastic minimum vertex cover} under independent edge realizations (edge-query model). Here $\mathrm{RS}(n)$ denotes the Ruzsa--Szemer\'edi parameter (max.\ number of induced, edge-disjoint $\Theta(n)$-sized matchings in an $n$-vertex graph).}
\end{table}

\section{Notation}
\begin{itemize}
    \item $G=(V,E)$: known base graph; $n \coloneqq |V|$.
    \item For each edge $e\in E$, realization probability $p\in(0,1]$ (one can consider more general $p_e$ but $p$ is good for now). \ola{Should we say that we can do general $p_e$ and present the paper with $p$?}
    \item $G^\star$: random realized subgraph obtained by including each $e\in E$ independently with probability $p$.
    \item \emph{Edge query}: reveals whether a particular $e\in E$ is present in $G^\star$.
    \item $\MVC{H}$: a minimum vertex cover of a graph $H$ (if there are many fix an arbitary but consistent one).
    \item $G^* \sim G_p$. For a set of edges $F$ I'm referring to the process of realizing all the edges other than $F$ as $G_p\setminus F$.
    \item For a relization $G^\star$ and vertex $v\in V$ we let $N_{G^\star}(v)$ be the neighbors of $v$ in $G^\star$.
    \item $\OPT \coloneqq \MVC{G^\star}$: (random) optimal vertex cover on the realized graph. \ali{are we still using it?}
    \item $\opt \coloneqq \mathbb{E}[\,|\OPT|\,]$: expected size of the optimal vertex cover.
    \item For $v\in V$: $c_v \coloneqq \Pr[v\in \OPT]$ (so $\sum_{v\in V} c_v = \opt$).
    \item For an edge $e=(u,v)$: $c_e \coloneqq \Pr[u\in \OPT \ \text{or}\ v\in \OPT]$.
    \item For any pre-committed vertex set $P\subseteq V$: $H \coloneqq G[V\setminus P]$ is the induced subgraph of uncovered vertices.
    \item \inge{define $N_G(v)$ and $N_G(S)$ for a set $S\subseteq V$.} \ali{added}
\end{itemize}

\ali{we can use the following for notation}
\paragraph{Notation.}
Throughout, let $G=(V,E)$ denote the base graph, with $n = |V|$. 
Fix an edge-realization parameter $p \in (0,1]$; each edge $e \in E$ is realized independently with probability $p$. 
All results extend to the heterogeneous model with edge-wise probabilities $(p_e)_{e\in E}$ by replacing $p$ in the statements with $p := \min_{e\in E} p_e$. For clarity of exposition, throughout the paper we present the homogeneous case $p_e = p$ for all $e \in E$.

Let $\Gstar$ be the random realized subgraph obtained by including each $e\in E$ independently with probability $p$; we write $\Gstar \sim G_p$. For a set of edges $F \subseteq E$, we use $G_p \setminus F$ to denote the distribution obtained by realizing all edges in $E \setminus F$ as above while leaving the edges in $F$ unresolved (to be realized later). An \emph{edge query} reveals whether a particular $e \in E$ is present in $\Gstar$.

For a graph $H=(V,E)$ and $v\in V$, let $N_H(v)$ be the set of neighbors of $v$ in $H$; for $S\subseteq V$, let
$N_H(S)=\{u\in V\setminus S: \exists s\in S\text{ with } (u,s) \in E\}$.
Let $\MVC{\cdot}$ be the mapping that assigns to every realized graph $\Gstar$ an arbitrary but fixed minimum vertex cover $\MVC{\Gstar}=S\subseteq V$. %
Let $\MVC{\Gstar}$ denote the resulting (random) minimum vertex cover on the realized graph, and define $\opt = \E_{\Gstar\sim G_p}\big[\abs{\MVC{\Gstar}}\big]$. %
For each $v \in V$, set $c_v = \Pr\big[v \in \MVC{\Gstar}\big]$, so that $\sum_{v \in V} c_v = \opt$. For an edge $e=(u,v) \in E$ define the probability that edge $e$ is covered as $c_e = \Pr[u \in \MVC{\Gstar} \text{ or } v \in \MVC{\Gstar}]$.

}%

\paragraph{Roadmap.} 
We formally define the stochastic vertex cover problem and give our \ouralg algorithm in Section~\ref{sec:ouralg}. The analysis of this algorithm, assuming Theorem~\ref{thm:MVC-concentration} (the concentration result for minimum vertex cover), in given in Section~\ref{sec:vertex-seed}. Finally, the proof of Theorem~\ref{thm:MVC-concentration} is given in Section~\ref{sec:mvc-concentration}.

\section{The \ouralg Algorithm}\label{sec:ouralg}In this section, we first formally define the stochastic vertex cover problem, and establish notation that we will use throughout the paper. Then, we give a formal description of our \ouralg algorithm.

\subsection{Notation and Terminology}
Throughout, let $G=(V,E)$ denote the base graph, with $n = |V|$. 
Fix an edge-realization parameter $p \in (0,1]$; each edge $e \in E$ is realized independently with probability $p$. 
All results extend to the heterogeneous model with edge-wise probabilities $(p_e)_{e\in E}$ by replacing $p$ in the statements with $p := \min_{e\in E} p_e$. For clarity of exposition, throughout the paper we present the homogeneous case $p_e = p$ for all $e \in E$.

Let $\Gstar$ be the random realized subgraph obtained by including each $e\in E$ independently with probability $p$; we write $\Gstar \sim G_p$. For a set of edges $F \subseteq E$, we use $\Gstar \setminus \Fstar$ to denote the distribution obtained by realizing all edges in $E \setminus F$ as above while leaving the edges in $F$ unresolved (to be realized later). An \emph{edge query} reveals whether a particular $e \in E$ is present in $\Gstar$.

For a graph $H=(V,E)$ and $v\in V$, let $N_H(v)$ be the set of neighbors of $v$ in $H$; for $S\subseteq V$, let
$N_H(S)=\{u\in V\setminus S: \exists s\in S\text{ with } (u,s) \in E\}$.
Let $\MVC{\cdot}$ be the mapping that assigns to every realized graph $\Gstar$ an arbitrary but fixed minimum vertex cover $\MVC{\Gstar}=S\subseteq V$. %
Let $\MVC{\Gstar}$ denote the resulting (random) minimum vertex cover on the realized graph, and define $\opt = \E_{\Gstar\sim G_p}\big[\abs{\MVC{\Gstar}}\big]$. %
For each $v \in V$, set $c_v = \Pr\big[v \in \MVC{\Gstar}\big]$, so that $\sum_{v \in V} c_v = \opt$. For an edge $e=(u,v) \in E$ define the probability that edge $e$ is covered as $c_e = \Pr[u \in \MVC{\Gstar} \text{ or } v \in \MVC{\Gstar}]$.

\subsection{The Stochastic Vertex Cover Problem}
Given a base graph $G=(V,E)$ and a probability parameter $p$, in the stochastic vertex cover problem, the goal is to output a feasible vertex cover of the realized graph $\Gstar$, in which each edge of $G$ is realized in $\Gstar$ independently with probability $p$, while querying as few edges in $G$ as possible. The algorithm is allowed unlimited access to the base graph $G$ as well as unlimited computation time. 
For an $\alpha>0$, we say a (randomized) solution $C$ is an $\alpha$-approximate stochastic vertex cover, if any edge in $\Gstar$ has at least one of its endpoint in $C$, and $\E[\abs{C}]\le \alpha \cdot \E[\abs{\MVC{\Gstar}}]$.

In this paper, we give a non-adaptive $(1+\eps)$-approximation algorithm for the stochastic vertex cover problem, i.e., it queries a fixed set of edges chosen in advance, independent of all query outcomes.

\subsection{Description of the \ouralg algorithm}\label{sec:ouralg-description}

First, we choose a subset $\detsetvar\subseteq V$, minimizing $\abs{\detsetvar} + \E\big[\abs{\MVC{\Gstar[V\setminus \detsetvar]}}\big]$ under the constraint that the induced subgraph $G[V\setminus \detsetvar]$ contains at most $O(n/(\eps^5p))$ edges. Let that set be $\detset$. We then query all edges in the induced subgraph $G[V\setminus \detset]$ and compute the minimum vertex cover $\mvcsetalg$ of the now known, realized graph $\Gstar[V\setminus \detset]$. Finally, we return the set $\detset \cup \mvcsetalg$ as our vertex cover. Recall, as in previous papers, that we are not concerned with computational efficiency, but only the query complexity. %
The pseudocode for the algorithm is as follows:

\begin{center}
    \begin{minipage}{0.95\textwidth}
        \begin{mdframed}[hidealllines=true, backgroundcolor=gray!20]
            {\bf Algorithm:} \ouralg~ %
            \begin{enumerate}
                \item Let $\detset$ be the optimal solution to \begin{equation}\label{problem:best-S}
                \begin{aligned}
                \min_{\detsetvar \subseteq V}\quad
                 & |\detsetvar| \;+\; \mathbb{E}\!\left[\;\bigl|\,\MVC{\Gstar[V\setminus \detsetvar]}\bigr|\;\right] \\
                    \text{s.t.}\quad
                  & G[V\setminus \detsetvar]\; \text{ has at most }\; 
                  2\cdot\frac{10^3\;n}{\eps^{5}p}\;%
                  \text{edges.}
            \end{aligned}
            \end{equation}
                \item Query the edges in $G[V\setminus \detset]$ to get its realization $\Gstar[V\setminus \detset]$.
                \item Let $\mvcsetalg = \MVC{\Gstar[V\setminus \detset]}$.
                \item Return $\detset \cup \mvcsetalg$.
            \end{enumerate}
        \end{mdframed}
    \end{minipage}
\end{center}

It is immediate that this algorithm correctly produces a valid vertex cover:
\begin{claim}
    The output of the \ouralg\ algorithm is a vertex cover for $\Gstar$.
\end{claim}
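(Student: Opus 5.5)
The argument is a direct case analysis on where the endpoints of an arbitrary realized edge lie relative to the committed set $\detset$. Fix any realization $\Gstar$ and any edge $e=(u,v)$ of $\Gstar$; since $\Gstar$ is a subgraph of $G$, $e\in E$. I will show that $\{u,v\}\cap(\detset\cup\mvcsetalg)\neq\emptyset$, splitting into two cases.

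\emph{Case 1: at least one of $u,v$ lies in $\detset$.} Then $e$ is covered by $\detset$, which is part of the output $\detset\cup\mvcsetalg$, so there is nothing more to show.

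\emph{Case 2: both $u,v\in V\setminus\detset$.} Then $e$ is an edge of the induced subgraph $G[V\setminus\detset]$. By step~2 of the algorithm, every edge of this subgraph is queried, so the algorithm learns the realized graph $\Gstar[V\setminus\detset]$ exactly. Because $e\in\Gstar$ and both endpoints lie outside $\detset$, we have $e\in\Gstar[V\setminus\detset]$. By definition, $\mvcsetalg=\MVC{\Gstar[V\setminus\detset]}$ is a vertex cover of this graph, so it contains $u$ or $v$.

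The only point that needs care is that step~3 is well defined: $\mvcsetalg$ must be computable from the queried information alone. This holds because $\Gstar[V\setminus\detset]$ is fully determined by the queries made in step~2.

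Two further remarks are immediate. First, the optimization problem~\eqref{problem:best-S} is feasible, since taking $\detsetvar=V$ leaves an edgeless induced subgraph, which trivially satisfies the edge constraint; hence $\detset$ exists. Second, the vertex-cover property depends only on the construction of the output, not on the particular choice of $\detset$, so no real obstacle arises anywhere in the argument.
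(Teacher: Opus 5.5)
Your proof is correct and is the same argument as the paper's: every realized edge either has an endpoint in $\detset$ or lies in $\Gstar[V\setminus\detset]$, and in the latter case it is covered by $\mvcsetalg$. Your extra remarks are valid but not needed for the claim: feasibility of Problem~\eqref{problem:best-S} via $\detsetvar = V$, and computability of $\mvcsetalg$ from the queries.
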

\begin{proof}
    Each edge either has an endpoint in $\detset$ or is an edge in the induced subgraph $\Gstar[V\setminus \detset]$.  
\end{proof}

\section{Analysis of the \ouralg Algorithm}\label{sec:vertex-seed}
In this section we analyse the \ouralg algorithm using our surrogate algorithm. In the first two subsections we describe the \vertexseedalg algorithm and bound the size of the set $Q$ chosen by the algorithm. Next we describe how to choose the adaptive set $\SEED(\Gstar)$ and prove that it contains at most $(1+O(\eps))\cdot \opt$ vertices in expectation. Finally, we prove that there are at
most $O_\eps(n/p)$ edges in $G$ that are not covered by $\SEED(\Gstar)$, and analyze the performance of the \ouralg algorithm.   

Throughout the analysis, we assume that $\opt \geq c \cdot \log(1/\eps)/\eps^2$ for some constant $c$.
In Appendix~\ref{app:proofs-sec-4} we show that this assumption is without loss of generality:
if $\opt = O(\log(1/\eps)/\eps^2)$, then the base graph $G$ contains $O(n/(p\eps^3))$ edges and the
\ouralg algorithm can select $\detset = \emptyset$, query all of $G$ and obtain an exact solution.

\subsection{The \vertexseedalg algorithm}
In this section, we describe the \vertexseedalg algorithm,  a deterministic algorithm that returns a sequence of vertices $Q = (v_1, v_2, \ldots, v_k)$ in the base graph $G$ depending only on a vertex cover function $\VC{\Gstar}$ that maps every realization $\Gstar$ to a fixed feasible vertex cover of $\Gstar$. 
The intuition is that each vertex $v_i$ in $Q$ corresponds to a query of the type ``is $v_i \in \VC{\Gstar}$?'' If the answer is negative, i.e. $v_i \notin \VC{\Gstar}$,  then all neighbors of $v_i$ in $\Gstar$ are necessarily in $\VC{\Gstar}$. Our goal in selecting $Q$ is to keep $|Q|$ small while revealing a large number of vertices in $\VC{\Gstar}$ by virtue of being neighbors of vertices in $Q\setminus \VC{\Gstar}$. The \vertexseedalg algorithm describes a greedy procedure for incrementally constructing $Q$ with this purpose. As input, in addition to the base graph $G$, the \vertexseedalg algorithm takes two parameters, $\delta$ and $\gamma$, which will be defined later. 

Before defining the algorithm, we introduce some notation:
\begin{itemize}
    \item For a sequence of vertices $Q_i = (v_1, v_2, \ldots, v_i)$ and a fixed realization $\Gstar$, define 
    \begin{align}
        \label{eqn:decided-definition}
        \decided(Q_i,\Gstar) = \left\{v \in V \setminus Q_i: N_{\Gstar}(v) \cap (Q_i \setminus \VC{\Gstar}) \not= \emptyset\right\}.
    \end{align}
    In words, a vertex $v$ is in $\decided(Q_i,\Gstar)$ if it has a neighbor that is in $Q_i$ but not in $\VC{\Gstar}$. 
    Note that a vertex $v \in \decided(Q_i, \Gstar)$ necessarily belongs to $\VC{\Gstar}$, by virtue of feasibility of $\VC{\Gstar}$. 
    We further let $\undecided(Q_i, \Gstar) = (V \setminus Q_i) \setminus \decided(Q_i, \Gstar)$. Thus, the sets $\decided(Q_i,\Gstar)$ and $\undecided(Q_i,\Gstar)$ induce a partition of the vertices in $V \setminus Q_i$.
    
    \item For a sequence of vertices $Q_i = (v_1, v_2, \ldots, v_i)$ and a fixed realization $\Gstar$, define
    \begin{align}
        \label{eqn:problematic-definition}
          \problematic(Q_i, \Gstar) = \left\{v\in V \setminus Q_i : v \not \in \VC{\Gstar} \mbox{ and } |N_G(v) \cap \undecided(Q_i, \Gstar)| \geq  \frac{1}{p\cdot \gamma}  \right\}\,.
    \end{align}
    In words, a vertex $v$ is in $\problematic(Q_i, \Gstar)$ if it is not in $\VC{\Gstar}$ {\em and} has at least $\frac{1}{p\cdot \gamma}$ neighbors in the base graph $G$ that are in $\undecided(Q_i, \Gstar)$.
    Intuitively, a vertex $v \in \problematic(Q_i, \Gstar)$ is a good candidate for extending $Q_i$ to $Q_{i+1}$ since it is likely to move a large number of vertices from $\undecided(Q_i, \Gstar)$ to $\decided(Q_{i+1}, \Gstar)$.
\end{itemize}

The \vertexseedalg algorithm starts with $Q$ being empty and then constructs it iteratively by adding vertices that have a high probability of being in $\problematic(Q, \Gstar)$. The pseudocode is as follows:
\begin{center}
    \begin{minipage}{0.95\textwidth}
        \begin{mdframed}[hidealllines=true, backgroundcolor=gray!20]
            {\bf Algorithm:} \vertexseedalg~ \\[-2mm]

            Initialize $Q$ to be the empty sequence.\\[-2mm]

            While there is a vertex $v$ such that $\Pr_{\Gstar}[v\in \problematic(Q, \Gstar)] \geq \delta$, append $v$ to $Q$.
        \end{mdframed}
    \end{minipage}
\end{center}

When analyzing the algorithm,
we use the notation $Q_i = (v_1, \ldots, v_i)$ to denote the set of vertices in $Q$ after $i$ iterations of the \vertexseedalg algorithm: $Q_0$ denotes the initial empty set, and $Q_k$ denotes the final set if the algorithm terminates after $k$ iterations. Note that the sequence of computed $Q_i$'s is deterministic, and when we measure probability within the loop (namely $\Pr_{\Gstar}[v\in \problematic(Q_i, \Gstar)]$), it is only over the draw of $\Gstar$. Furthermore, note that the algorithm necessarily terminates after at most $n$ iterations.

\subsection{Bounding the number of vertices in \texorpdfstring{$Q$}{Q}}

Observe that the \vertexseedalg algorithm is deterministic and hence, the number of vertices in $Q$ is also deterministic. The next lemma bounds this number.

\begin{lemma}
\label{lem:vertexseed}
    Let $Q = (v_1, \ldots, v_k)$ be the sequence of vertices constructed by the \vertexseedalg algorithm. For any $n \ge 4\log(2/\delta)$, we have that $k \leq \frac{10 \gamma}{\delta} \cdot n$.  %
\end{lemma}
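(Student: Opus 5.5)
The plan is a potential argument. For a fixed realization, $|\decided(Q_i,\Gstar)|$ grows each time a revealing vertex is processed, and it can never exceed $n$. We then show that in expectation it grows by $\Omega(1/\gamma)$ in at least $\delta k$ rounds.

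\textbf{Setup.} For a fixed $\Gstar$, let $T(\Gstar)=\{i : v_i \in \problematic(Q_{i-1},\Gstar)\}$.
\begin{itemize}
\item By the selection rule of \vertexseedalg, $\Pr[i\in T(\Gstar)]\ge\delta$ for every $i$, so $\E|T(\Gstar)|\ge \delta k$.
\item Since $|T|\le k$ always, a Markov-type argument gives $\Pr[|T(\Gstar)|\ge \delta k/2]\ge \delta/2$.
\item For $i\in T$, put $U_i = N_G(v_i)\cap \undecided(Q_{i-1},\Gstar)$, so $|U_i|\ge 1/(p\gamma)$.
\item Since $v_i\notin\VC{\Gstar}$, every vertex of $N_{\Gstar}(v_i)\cap U_i$ enters $\decided(Q_i,\Gstar)$.
\item Decided vertices stay decided unless they are later appended to $Q$. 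Hence the sets $N_{\Gstar}(v_i)\cap U_i$, $i\in T$, are pairwise disjoint, and $\sum_{i\in T}|N_{\Gstar}(v_i)\cap U_i|\le n$ deterministically.
\end{itemize}

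\textbf{Realized edges to $U_i$.} It remains to show that the number of realized edges from $v_i$ into $U_i$ is typically about $p|U_i|\ge 1/\gamma$. Then $|T|\cdot\Omega(1/\gamma)\le n$, i.e.\ $|T|=O(\gamma n)$. Combined with $|T|\ge\delta k/2$, which holds with positive probability, this gives $k=O(\gamma n/\delta)$, and the constants should yield the factor $10$.

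The difficulty is that $U_i$ and the event $v_i\notin\VC{\Gstar}$ both depend on the global cover $\VC{\Gstar}$, which depends on all edges. To decouple, I would condition on (union bound over) the set $S = Q\setminus\VC{\Gstar}$. Once $S$ is fixed:
\begin{itemize}
\item $\decided(Q_{i},\cdot)$ is the set of non-$Q_i$ vertices with a realized edge to $S\cap Q_i$. This is a deterministic function of the edges incident to $S\cap Q_i$ only.
\item Process $i=1,\dots,k$ in order and expose edges incident to $v_i$ only when $v_i\in S$.
\item For $u\in U_i$, the edge $v_iu$ has not been exposed: $u\notin Q_{i-1}$, $v_i\notin Q_{i-1}$, and $u$ being undecided does not reveal that edge.
\item So for each $i\in T\subseteq S$, the quantity $|N_{\Gstar}(v_i)\cap U_i|$ is a sum of at least $1/(p\gamma)$ fresh Bernoulli$(p)$ variables, given the past.
\end{itemize}
A martingale Chernoff bound then shows the following. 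For any fixed $S$ and any $m$, the probability that at least $m$ indices in $S$ qualify while their total realized count into the $U_i$ is at most $n$ is at most $\exp(-\Omega(m/\gamma))$, once $m\ge 2\gamma n$.

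\textbf{Main obstacle.} The hard step is the union bound over the up to $2^k$ choices of $S$. It must be beaten by $\exp(-\Omega(\delta k/\gamma))$, with $m=\delta k/2$, while still leaving probability at least $\delta/2$ for the good event. This works cleanly only when $\gamma$ is small compared to $\delta$. The hypothesis $n\ge 4\log(2/\delta)$ is presumably what absorbs the $\log(2/\delta)$ loss from the Markov step; for instance, if $k>10\gamma n/\delta$, it forces the failure probability below $\delta/2$. If the union bound turns out too lossy, my fallback is to avoid it entirely. I would argue directly that $\E\big[\sum_{i\in T}|N_{\Gstar}(v_i)\cap U_i|\big]\ge \delta k/(2\gamma)$, by the same exposure argument with the conditioning done round by round on the indicator $[v_i\in S]$, and compare this with the deterministic bound $n$.
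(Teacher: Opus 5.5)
Your plan is essentially the paper's proof. It has the same ingredients: the averaging step giving $\Pr[|T|\ge\delta k/2]\ge\delta/2$; a union bound over at most $2^k$ subsets of $Q$ (the paper indexes by the set of revealing vertices rather than by $Q\setminus\VC{\Gstar}$, which amounts to the same thing); a sequential random-bit exposure in which the edges from $v_i$ to its currently undecided $G$-neighbours are fresh; disjointness capping the number of successes at $n$; and a Chernoff bound.

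Your union-bound worry is settled by $k\le n$, since a vertex already in $Q$ is never revealing again, so $2^k\le 2^n$. Meanwhile $k>10\gamma n/\delta$ gives more than $5\gamma n$ revealing indices, hence at least $5n/p$ fresh bits containing at most $n$ ones, an event of probability at most $e^{-n}$. The total is then $(2/e)^n\le\delta/2$, with no need for $\gamma\ll\delta$.

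Drop the round-by-round conditioning fallback. Conditioning on $v_i\notin\VC{\Gstar}$ is exactly what destroys the freshness of $v_i$'s edges.
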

First, we give the intuition behind the lemma, before we formally prove it. Suppose a vertex $v\in \problematic(Q, \Gstar)$ is added to $Q$ by the \vertexseedalg algorithm. Then, it has $\frac{1}{p\cdot \gamma}$ neighbors in $G$ that are currently undecided. Informally, we should expect that $1/\gamma$ of these neighbors realize in $\Gstar$ and, therefore, are moved from $\undecided(Q, \Gstar)$ to $\decided(Q, \Gstar)$ as a result of adding $v$ to $Q$. Since there are only $n$ vertices in $G$, the number of times this can happen is $\gamma\cdot n$. Finally, since the event $v\in \problematic(Q, \Gstar)$ holds with probability $\delta$ for every vertex added to $Q$, we should expect that $Q$ can only contain $\frac{\gamma \cdot n}{\delta}$ vertices. 

The argument above is not formal because the expected number of vertices moved from $\undecided(Q, \Gstar)$ to $\decided(Q, \Gstar)$ by a vertex $v\in \problematic(Q, \Gstar)$ is not necessarily $1/\gamma$, since the neighborhood of $v$ in $\Gstar$ is not independent of the event $v\in \problematic(Q, \Gstar)$. Nevertheless, we show that this intuition holds, and can be made formal, in the proof below.

\begin{proof}[Proof of Lemma~\ref{lem:vertexseed}]
    For $i=1, \ldots, k$, let $X_i$ be the (random) indicator variable that is $1$ if $v_i \in \problematic(Q_{i-1}, \Gstar)$ and $0$ otherwise. Note that $\E[X_i] \ge \delta$ for all $i$ by definition of the \vertexseedalg algorithm.
    Let $X = \sum_{i=1}^k X_i$. Thus, $\E\left[ X \right] = \sum_{i=1}^k \mathbb{E}[X_i] \geq k \cdot \delta$. 
    
    Let $u = \frac{10\gamma}{\delta}\cdot n$. Recall that we want to show that $k \le u$. We consider two cases, $X \le u \cdot \delta/2$ and $X > u \cdot \delta/2$, and write 
    \begin{align}\label{eq:Q-bound}
        k\cdot \delta \leq \E\left[ X \right] \leq u\cdot \delta/2  + k \cdot \Pr[X > u\cdot \delta/2].
    \end{align}
    To complete the proof, we will show that $\Pr[X > u\cdot \delta/2] \leq \delta/2$, which implies $k \leq u$ as required.

Now let $\mathcal{G}$ be the subset of realizations of $\Gstar$ for which  $X \geq u\cdot \delta/2$. In other words, we want to show that $\Pr[\Gstar \in \mathcal{G}] \le \delta/2$.
 We further partition $\mathcal{G}$ as follows: for each $S \subseteq Q$ such that $|S| \ge u\cdot \delta/2$, we let $\mathcal{G}_S \subseteq \mathcal{G}$ be the realizations of $\Gstar$ for which $\{v_i \in Q \;|\; v_i \in \problematic(Q_{i-1}, \Gstar)\} = S$. %
    These chosen $\mathcal{G}_S$ partition $\mathcal{G}$ since, by definition of $\mathcal{G}$, every realization $\Gstar \in \mathcal{G}$ has 
    \[
    X = |\{v_i \in Q \mid v_i \in \problematic(Q_{i-1}, \Gstar)\} | \geq u \cdot \delta/2.
    \]
    We thus have that $\mathcal{G}$ is partitioned into at most $\sum_{\ell=u\delta/2}^k \binom{k}{\ell} \le 2^k$ many sets $\mathcal{G}_S$. %

    We proceed to bound $\Pr[\Gstar \in \mathcal{G}_S]$  for a fixed set $S$. To do this, the following viewpoint for generating a realization $\Gstar$ will be helpful. There is a random string $R = (r_1, r_2,\ldots)$ of bits where each $r_i$ is a random bit that is $1$ with probability $p$ and $0$ otherwise. The realization $\Gstar$ is generated as follows:
    \begin{itemize}
        \item For each $v_i \in S$ (in the order it was added to $Q$), the existence of an edge $e=(v_i, v)$ in $\Gstar$ is determined by the next random bit in $R$ if the following properties are satisfied by $v$:
        \begin{itemize}
        		\item $v$ is a neighbor of $v_i$ in the base graph $G$,
        		\item $v$ is not in $\{v_1, \ldots, v_{i-1}\}$, and
        		\item there is no edge $(v_j, v)$ realized so far in $\Gstar$ such that $v_j$ is a vertex in $\{v_1, \ldots, v_{i-1}\} \cap S$. 
        	\end{itemize}
        \item Any remaining edge not considered above is simply realized with probability $p$ independently.
    \end{itemize}

    The above procedure simply realizes each edge with probability $p$ independently but distinguishes (as a function of $S$) certain random bits with $R$.

    Now, we claim that if any $\Gstar$ realized this way ends up belonging to $\mathcal{G}_S$, then we must have used at least $5n/p$ bits from $R$. To see this, consider the realization of edges adjacent to some fixed vertex $v_i \in S$ in the process above. Since the realized graph $\Gstar \subseteq \mathcal{G}_S$, it must be the case that $v_i \in \problematic(Q_{i-1}, \Gstar)$, which  means that $v_i$ has at least $1/(p\cdot \gamma)$ neighbors in $G$ that are in $\undecided(Q_{i-1}, \Gstar)$. Since $\undecided(Q_{i-1}, \Gstar) \subseteq V \setminus Q_{i-1}$, all these neighbors are not in $Q_{i-1}$. Moreover, since each vertex $v_i\in S$ is in $\problematic(Q_{i-1}, \Gstar)$, we have that none of the vertices in $S$ are in $\VC{\Gstar}$. Therefore, if a vertex $v$ is in $\undecided(Q_{i-1}, \Gstar)$, its neighborhood in $\Gstar$ is disjoint from $Q_{i-1} \cap S$. As a result, $v_i$ has at least
    $1/(p\cdot \gamma)$ neighbors $v$ in $G$, such that $v \notin Q_{i-1}$, and furthermore, the neighborhood of $v$ in $\Gstar$ is disjoint from $Q_{i-1}\cap S$. %
    
    For any such neighbor $v$ of $v_i$, the process described above uses a new bit from $R$ to confirm the presence of the edge $(v_i, v)$. In total, the number of potential edges determined using the random bits from $R$ is hence at least
    \[
        \frac{|S|}{p\cdot \gamma}  \geq  \frac{u \cdot \delta/2}{p\cdot \gamma} = \frac{5 n}{p}, \text{ since } u = \frac{10\gamma}{\delta}\cdot n.
    \]
    
    Observe also that if an edge $(v_i,v)$ is confirmed to be in $\Gstar$ using a bit from $R$, then no other edge $(v_j, v)$ for $j > i$ is determined using a bit from $R$. %
    This means that the number of edges confirmed to be in $\Gstar$ using bits from $R$ can be at most the number of distinct vertices in $G$, which is $n$. In summary, we conclude that, if the realized graph $\Gstar \in \mathcal{G}_S$, then it must be the case that we used at least $5n/p$ bits from $R$, and at most $n$ of these bits realized edges in $\Gstar$. So, consider the first $5n/p$ bits in $R$. As the expectation of the number of ones in the independent Bernoulli trials $r_1, \ldots, r_{5n/p}$ is $5n$, we have by a standard Chernoff bound that
    \begin{align*}
        \Pr[\Gstar \in \mathcal{G}_S] \le \Pr\left[\sum_{j=1}^{5n/p}r_j \le \left(1-\frac{4}{5}\right)5n\right] \leq \exp\left(\frac{-5n\cdot(4/5)^2}{3}\right) \le e^{- n}.
    \end{align*}
    Finally, by a union bound over the partitioning of $\mathcal{G}$,
    \begin{align*}
        \Pr[\Gstar \in \mathcal{G}] \leq  2^k\cdot e^{-n}
 		\le (2/e)^{n}       
         \leq \delta/2, %
    \end{align*}
    where the last inequality holds for $n \ge 4\log(2/\delta)$.
    So, by Eq.~\eqref{eq:Q-bound}, $k \le \frac{10\gamma}{\delta}\cdot n$ as required.     
\end{proof}

\subsection{Selection of the vertex set \texorpdfstring{$\SEED(\Gstar)$}{SEED(G*)}}

We can use the \vertexseedalg algorithm directly on the base graph $G$ and add the vertices in $Q$ to the set $\SEED(\Gstar)$. But, this results in a set $Q$ whose size is independent of the expected size of $\MVC{\Gstar}$, which translates to an additive error in the approximation bound of the algorithm. Instead, we use the \vertexseedalg algorithm in a more nuanced fashion that avoids this additive loss.

We partition vertices $V$ in $G$ into three sets:
\begin{itemize}
	\item The set $L := \{v\in V: \Pr_{\Gstar}[v\in \MVC{\Gstar}] \ge 1-2\eps \}$ of vertices that have {\em large} probability of being in $\MVC{\Gstar}$, 
	\item the set $M := \{v\in V: \eps < \Pr_{\Gstar}[v\in \MVC{\Gstar}] < 1-2\eps \}$ of vertices that have {\em moderate} probability of being in $\MVC{\Gstar}$, and
	\item the set $S := \{v\in V: \Pr_{\Gstar}[v\in \MVC{\Gstar}] \le \eps \}$ of vertices that have {\em small} probability of being in $\MVC{\Gstar}$.	
\end{itemize}
Recall that $\opt$ denotes the expected size of the minimum vertex cover, i.e., $\opt := \E[\MVC{\Gstar}]$. In the next two claims, we bound the number of vertices in $L$ and $M$, in terms of $\opt$. Note that $L$ and $M$ are deterministic sets, but $\MVC{\Gstar}$ is random.
\begin{claim}\label{cl:L}
    For any $\eps \le 1/4$,  $\E[L\setminus \MVC{\Gstar}]$ is at most $4\eps\cdot \opt$. %
\end{claim}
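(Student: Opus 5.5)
The plan is a direct linearity-of-expectation argument. Since $L$ is a deterministic set, I first write
\[
\E\bigl[|L\setminus \MVC{\Gstar}|\bigr] \;=\; \sum_{v\in L} \Pr\bigl[v\notin \MVC{\Gstar}\bigr] \;=\; \sum_{v\in L} (1-c_v).
\]
By the definition of $L$, every $v \in L$ has $c_v \ge 1-2\eps$, so each summand is at most $2\eps$. This gives $\E[|L\setminus \MVC{\Gstar}|] \le 2\eps\,|L|$.

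The only remaining step is to bound $|L|$ in terms of $\opt$. Here I use $\opt = \sum_{v\in V} c_v \ge \sum_{v\in L} c_v \ge (1-2\eps)|L|$, which gives $|L| \le \opt/(1-2\eps)$.

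Combining the two bounds,
\[
\E\bigl[|L\setminus \MVC{\Gstar}|\bigr] \;\le\; \frac{2\eps}{1-2\eps}\,\opt .
\]
For $\eps \le 1/4$ we have $1-2\eps \ge 1/2$, so this is at most $4\eps\cdot\opt$, as claimed.

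None of these steps is a real obstacle; the argument is elementary once it is split this way. The only care needed is the interpretation: $\E[L\setminus\MVC{\Gstar}]$ in the statement should be read as the expected cardinality of that set. The condition $\eps\le 1/4$ is used only to turn the factor $\frac{2\eps}{1-2\eps}$ into the clean constant $4\eps$.
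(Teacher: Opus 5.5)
Your proof is correct and follows essentially the paper's argument: bound each vertex's contribution by $2\eps$ via linearity of expectation, bound $|L| \le \opt/(1-2\eps)$, and use $\eps \le 1/4$ to reach $4\eps\cdot\opt$. The only cosmetic difference is that you get the bound on $|L|$ directly from $\opt \ge \sum_{v\in L} c_v$, whereas the paper derives it by rearranging $|L| \le 2\eps|L| + \opt$; both are the same inequality.
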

\begin{proof}
    Note that for any vertex $v\in L$, we have $\Pr[v\in L\setminus \MVC{\Gstar}] \le 2\eps$. Therefore,
\[%
		\E[|L\setminus \MVC{\Gstar}|] 
		\le 2 \eps \cdot |L|.
\]
Using this bound, we get
\[
    |L| \le \E[\abs{L\setminus \MVC{\Gstar}}] + \E[\abs{\MVC{\Gstar}}] \le 2\eps\cdot |L| + \opt,
\]
which implies that $|L| \le \opt / (1-2\eps)$. Therefore,
\[
    \E[|L\setminus \MVC{\Gstar}|] 
		\le 2 \eps \cdot |L|
		\le \frac{2 \eps}{1-2\eps} \cdot \opt
		\le 4\eps\cdot \opt,
		\text{ for } \eps \le 1/4.\qedhere
\]      
\end{proof}

\begin{claim}\label{cl:M}
    The number of vertices in $M$ is at most $\opt / \eps$.
\end{claim}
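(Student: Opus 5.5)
This is a direct averaging argument, in the same spirit as the proof of Claim~\ref{cl:L}. The plan is to write $\opt$ as the sum of the marginal probabilities $c_v = \Pr_{\Gstar}[v\in \MVC{\Gstar}]$ and then lower bound the part of that sum contributed by $M$.

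First, by linearity of expectation we have
\[
\opt = \E\bigl[\abs{\MVC{\Gstar}}\bigr] = \sum_{v\in V} \Pr_{\Gstar}[v\in \MVC{\Gstar}] = \sum_{v\in V} c_v ,
\]
where each summand is nonnegative. Dropping the vertices in $L$ and $S$ therefore gives $\opt \ge \sum_{v\in M} c_v$.

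Next, by the definition of $M$, every $v\in M$ satisfies $c_v > \eps$. Hence $\sum_{v\in M} c_v > \eps\cdot |M|$ whenever $M\neq\emptyset$, and the bound holds trivially when $M=\emptyset$. Combining the two steps gives $\eps\,|M| \le \opt$, that is, $|M| \le \opt/\eps$.

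There is no real obstacle here. The only thing to check is that the inequality uses just the lower threshold $\eps$ in the definition of $M$. The upper threshold $1-2\eps$ plays no role, and no assumption on $\eps$ beyond $\eps>0$ is needed.
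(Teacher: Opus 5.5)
Your proof is correct and follows the paper's argument: $\opt$ is at least the expected number of vertices of $M$ in $\MVC{\Gstar}$, which is $\sum_{v\in M} c_v \ge \eps\,|M|$. You also note correctly that only the lower threshold $\eps$ in the definition of $M$ is used.
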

\begin{proof}
    Every vertex in $M$ is in $\MVC{\Gstar}$ with probability at least $\eps$, and $\opt$ is at least the expected number of vertices in $M$ that are in $\MVC{\Gstar}$. The claim follows.
\end{proof}

Now, we run the \vertexseedalg algorithm on the induced graph $G[M]$ and designate the output set $Q$. We also use $\Gstar[M]$ to denote the induced subgraph on $M$ of the realized graph $\Gstar$.
In the \vertexseedalg algorithm, we use the following parameters:
\begin{itemize}
    \item[-] The parameter $\delta,$ which is used as the probability threshold for a vertex to be added to $Q$, is set to $\delta \coloneqq \eps^2$. 
    \item[-] The parameter $\gamma$, which decides the threshold on the number of undecided neighbors in the base graph $G$ for a vertex to be deemed $\problematic$, is set to $\gamma \coloneqq \eps^3 \cdot \delta/10^3 = \eps^5/10^3$.
\end{itemize}
 Furthermore, the vertex cover $\VC{\Gstar[M]}$ used in the \vertexseedalg algorithm is set to $\MVC{\Gstar} \cap M$. Clearly, this is a feasible vertex cover of $\Gstar[M]$.  

Based on this choice of parameters, we can bound the number of vertices in $Q$. Recall that \vertexseedalg is a deterministic algorithm and it is run on a deterministic graph $G[M]$. Hence, $Q$ is deterministic.
\begin{claim}\label{cl:Q}
    The size of the set $Q$ output by the \vertexseedalg algorithm when run on $G[M]$ is at most $(\eps^2/100) \cdot \opt$.
\end{claim}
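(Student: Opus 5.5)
The plan is to apply Lemma~\ref{lem:vertexseed} to the base graph $G[M]$ and then convert its bound into one in terms of $\opt$ using Claim~\ref{cl:M}. The lemma is stated for a general base graph with $n$ vertices and an arbitrary feasible vertex cover function $\VC{\cdot}$. Nothing in its proof uses anything beyond feasibility of $\VC{\Gstar}$ and independence of the edge realizations. So we may invoke it with the vertex set $M$ in place of $V$, the realized graph $\Gstar[M]$, and the vertex cover $\VC{\Gstar[M]} = \MVC{\Gstar}\cap M$.

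The key point here is that $\VC{\Gstar[M]}$ depends on all of $\Gstar$, not only on $\Gstar[M]$. I will check that this is harmless. In the proof of Lemma~\ref{lem:vertexseed}, the only properties used are:
\begin{itemize}
\item every realized neighbor in $\Gstar[M]$ of a vertex of $S$ lies outside $\VC{\Gstar}$;
\item the edges of $G[M]$ incident to the vertices of $S$ are realized independently with probability $p$.
\end{itemize}
The second holds regardless of how $\VC{}$ is defined. The first holds because $\MVC{\Gstar}\cap M$ covers every edge of $\Gstar[M]$. The extra randomness outside $G[M]$ simply falls into the ``remaining edges'' step of the coupling. So the lemma applies with $n$ replaced by $|M|$, giving
\[
k \le \frac{10\gamma}{\delta}\,|M| \le \frac{10\gamma}{\delta}\cdot\frac{\opt}{\eps}.
\]

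Next I plug in the parameters. With $\gamma = \eps^3\delta/10^3$, we get $10\gamma/\delta = \eps^3/100$. Hence
\[
k \le \frac{\eps^3}{100}\cdot\frac{\opt}{\eps} = \frac{\eps^2}{100}\,\opt,
\]
which is the claim.

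The main obstacle is the side condition $|M| \ge 4\log(2/\delta) = 4\log(2/\eps^2)$ required by Lemma~\ref{lem:vertexseed}. I would handle small $M$ directly rather than through the lemma. The lemma's hypothesis is only needed for the final union bound $2^k e^{-|M|}\le \delta/2$.

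If $|M| < 4\log(2/\eps^2)$, I would first try the trivial bound $k\le |M|$, which holds since every vertex is added to $Q$ at most once. This gives $k \le 4\log(2/\eps^2)$, and that is at most $(\eps^2/100)\opt$ by the standing assumption $\opt \ge c\log(1/\eps)/\eps^2$, taking the constant $c$ large enough. Since $\log(2/\eps^2) = O(\log(1/\eps))$, this requires $c\log(1/\eps)/\eps^2 \cdot \eps^2/100 \ge 4\log(2/\eps^2)$, i.e.\ $c$ at least a few hundred for $\eps$ small. This closes the case, and in the other case the lemma applies directly.
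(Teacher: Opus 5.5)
Your proof is correct and follows the paper's argument: Lemma~\ref{lem:vertexseed} applied to $G[M]$ gives $|Q|\le(10\gamma/\delta)|M|=(\eps^3/100)|M|$, and Claim~\ref{cl:M} gives $|M|\le\opt/\eps$. Your separate treatment of the case $|M|<4\log(2/\delta)$, using $|Q|\le|M|$ and the standing assumption $\opt\ge c\log(1/\eps)/\eps^2$, supplies a hypothesis check that the paper skips.

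One slip: your first bullet has the direction reversed, since feasibility puts the realized neighbors of vertices of $S$ \emph{inside} $\VC{\Gstar}$. What the lemma's proof actually uses is only that the vertices of $S$ lie outside $\VC{\Gstar}$, which comes from the definition of $\problematic$ and holds for any set-valued function of $\Gstar$. So your conclusion that the lemma applies to $\MVC{\Gstar}\cap M$ still stands.
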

\begin{proof}
    By Lemma~\ref{lem:vertexseed} and our choice of parameters, we get $|Q| \le (\eps^3/100)\cdot |M|$. Now, the claim follows from the bound on $M$ in Claim~\ref{cl:M}.
\end{proof}

We are now ready to define the vertex set $\SEED(\Gstar)$ for any fixed realized graph $\Gstar$. We define $\SEED(\Gstar)$ as the union of four sets described below. The first two sets, $L$ and $Q$, do not depend on the realized graph $\Gstar$, whereas the last two sets depend on $\Gstar$.
\begin{itemize}
    \item[-] the set $L$ of vertices that have large probability of being in $\MVC{\Gstar}$,
    \item[-] the set $Q \subseteq M$ of vertices returned by the \vertexseedalg algorithm on $G[M]$,
    \item[-] the set of vertices decided by $Q$ in $\Gstar[M]$, i.e., $\decided(Q, \Gstar[M])$, and
    \item[-] the set of vertices $A := \{v\in M \setminus Q: |N_{G[M]}(v) \cap \undecided(Q, \Gstar[M])|\ge \frac{1}{p\cdot \gamma}\}$.
\end{itemize}
Next we prove that in expectation, $\SEED(\Gstar)$ contains a small number of vertices that are not in $\MVC{\Gstar}$. To do this we first  prove the following bound on the size of $A$.

\begin{claim}\label{cl:A}
    We have $\E[|A \setminus \MVC\Gstar|]%
    \leq \eps\cdot \opt$.
\end{claim}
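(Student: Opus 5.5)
Since $Q$ is the final output of \vertexseedalg on $G[M]$, the loop has terminated. So every vertex $v \in M\setminus Q$ satisfies
\[
\Pr_{\Gstar}\bigl[v\in \problematic(Q,\Gstar[M])\bigr] < \delta = \eps^2 .
\]
Here $\VC{\Gstar[M]} = \MVC{\Gstar}\cap M$ is the vertex cover used inside the algorithm. I will identify $A\setminus \MVC{\Gstar}$ with a subset of $\problematic(Q,\Gstar[M])$.

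Take $v\in A\setminus \MVC{\Gstar}$. Then $v\in M\setminus Q$, and $v\notin \MVC{\Gstar}$, so $v\notin \VC{\Gstar[M]}$. By definition of $A$, $v$ has at least $\frac{1}{p\gamma}$ neighbors in $G[M]$ lying in $\undecided(Q,\Gstar[M])$. These are exactly the two conditions in \eqref{eqn:problematic-definition}, with the base graph $G[M]$, $Q_i=Q$ and $\Gstar[M]$. Therefore
\[
A\setminus\MVC{\Gstar}\subseteq \problematic(Q,\Gstar[M]) .
\]
One point to check is that the neighborhood in the definition of $\problematic$ is taken in the base graph the algorithm runs on, namely $G[M]$. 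This matches $N_{G[M]}$ in the definition of $A$.

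Linearity of expectation and the termination condition then give
\[
\E\bigl[|A\setminus \MVC{\Gstar}|\bigr]\le \sum_{v\in M\setminus Q}\Pr\bigl[v\in\problematic(Q,\Gstar[M])\bigr]\le \eps^2\,|M| .
\]
By Claim~\ref{cl:M}, $|M|\le \opt/\eps$, so the right-hand side is at most $\eps\cdot\opt$, as claimed.

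I do not expect a real obstacle. The only care needed is to confirm that the random sets in the definition of $A$ and in the stopping rule of \vertexseedalg are the same objects. Both use $\undecided$ computed with $\VC{\Gstar[M]}=\MVC{\Gstar}\cap M$ on the realization $\Gstar[M]$. Both use the same deterministic final $Q$, so the termination inequality applies pointwise for every $v\in M\setminus Q$.
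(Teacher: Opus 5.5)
Your proposal is correct and follows essentially the same argument as the paper: show $A\setminus\MVC{\Gstar}\subseteq\problematic(Q,\Gstar[M])$, invoke the termination condition of \vertexseedalg for each $v\in M\setminus Q$, and combine $\delta=\eps^2$ with $|M|\le\opt/\eps$ from Claim~\ref{cl:M}. Your extra checks are consistent with the paper's setup, namely that neighborhoods are taken in $G[M]$ and that $\undecided$ is computed with $\VC{\Gstar[M]}=\MVC{\Gstar}\cap M$ and the same final $Q$.
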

\begin{proof}
  Consider any vertex $v \in M \setminus Q$. Upon sampling $\Gstar \sim G_p$, observe that if $v \in A \setminus \MVC\Gstar$, %
  then $v \in \problematic(Q, \Gstar[M])$. By definition of the termination of the \vertexseedalg algorithm, we therefore have that \[
  \textrm{Pr}_{\Gstar}[v \in A \setminus \MVC\Gstar]\; \le \;\textrm{Pr}_{\Gstar}[v\in \problematic(Q, \Gstar[M])] < \delta.
  \]
  Finally, by the bound on $M$ in Claim~\ref{cl:M}, this implies $\E[|A \setminus \MVC\Gstar|]< \delta\cdot |M| \leq \eps\cdot\opt$.   
\end{proof}

We can now bound the expected size of the set $\SEED(\Gstar)$.

\begin{lemma}\label{lem:seed-value-is-1-plus-epsilon}
    We have $\E[|\SEED(\Gstar) \setminus \MVC{\Gstar}|] \leq O(\eps)\cdot \E[\MVC{\Gstar}]$ for any $\eps < 1/4$. %
\end{lemma}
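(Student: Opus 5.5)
We bound $\E[|\SEED(\Gstar)\setminus \MVC{\Gstar}|]$ by a union bound over the four parts of $\SEED(\Gstar)$:
\[
\SEED(\Gstar)\setminus\MVC{\Gstar} \subseteq (L\setminus \MVC{\Gstar}) \cup (Q\setminus\MVC{\Gstar}) \cup (\decided(Q,\Gstar[M])\setminus \MVC{\Gstar}) \cup (A\setminus \MVC{\Gstar}).
\]
Taking expectations and using linearity, it suffices to bound each of the four terms by $O(\eps)\cdot\opt$.

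The first term is at most $4\eps\cdot\opt$ by Claim~\ref{cl:L}, which applies since $\eps<1/4$. The fourth term is at most $\eps\cdot\opt$ by Claim~\ref{cl:A}. For the second term we use the trivial bound $|Q\setminus\MVC{\Gstar}|\le |Q|$. Since $Q$ is deterministic, Claim~\ref{cl:Q} gives $|Q| \le (\eps^2/100)\cdot\opt$. This bound relies on the hypothesis $|M|\ge 4\log(2/\delta)$ of Lemma~\ref{lem:vertexseed}, now applied to $G[M]$ with $|M|$ playing the role of $n$. If $|M|$ is smaller, $Q\subseteq M$ directly gives $|Q|\le 4\log(2/\eps^2)$, which is $O(\eps^2)\cdot\opt$ under the standing assumption $\opt\ge c\log(1/\eps)/\eps^2$. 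Either way this term is negligible.

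The third term is where we must check that no extra cost arises. By definition, $\VC{\Gstar[M]}=\MVC{\Gstar}\cap M$, and $\decided(Q,\Gstar[M])$ consists of the vertices $v\in M\setminus Q$ that have a neighbor $u$ in $\Gstar[M]$ with $u\in Q\setminus \VC{\Gstar[M]}$. Such a $u$ lies in $M$ but not in $\MVC{\Gstar}$. Since $(u,v)$ is an edge of $\Gstar$ and $\MVC{\Gstar}$ is a feasible cover of $\Gstar$, we get $v\in\MVC{\Gstar}$. Hence $\decided(Q,\Gstar[M])\subseteq\MVC{\Gstar}$ deterministically, and the third term is $0$.

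Summing the four terms gives at most $(4\eps+\eps^2/100+\eps)\cdot\opt = O(\eps)\cdot\opt$, as claimed. The only step needing care is the third term: one must verify that the vertex cover fed to \vertexseedalg on $G[M]$ is the restriction of the global $\MVC{\Gstar}$, so that "decided" vertices are genuinely in $\MVC{\Gstar}$. The remaining steps are direct invocations of the earlier claims.
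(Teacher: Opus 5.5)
Your proposal is correct and follows essentially the same route as the paper: you observe that $\decided(Q,\Gstar[M])\subseteq\MVC{\Gstar}$ because $\VC{\Gstar[M]}=\MVC{\Gstar}\cap M$, and then sum the bounds from Claims~\ref{cl:L}, \ref{cl:Q} and~\ref{cl:A}. Your extra check of the hypothesis $|M|\ge 4\log(2/\delta)$ of Lemma~\ref{lem:vertexseed}, using the fallback $|Q|\le|M|=O(\log(1/\eps))=O(\eps^2)\cdot\opt$ under the standing assumption on $\opt$, is a valid refinement that the paper's Claim~\ref{cl:Q} silently skips.
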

\begin{proof}
Note first, that $\decided(Q, \Gstar[M])\subseteq \MVC\Gstar$, since by the definition of $\decided$, for each vertex $v\in \decided(Q, \Gstar[M])$,  there is a vertex $v_i \not \in \MVC{\Gstar}$ that is adjacent to $v$ in the realization $\Gstar$.
Thus,  
\[
\E[|\SEED(\Gstar)\setminus \MVC{\Gstar}|] = \E[|L\setminus\MVC{\Gstar}|] + \E[|Q\setminus\MVC{\Gstar}|] + \E[|A\setminus \MVC{\Gstar}|].\] 
By Claim~\ref{cl:L}, \ref{cl:Q}, and \ref{cl:A}, we have 
\[
    \E[|\SEED(\Gstar)\setminus \MVC{\Gstar}|] \leq 
    4\eps\cdot \opt + (\eps^2/100) \cdot \opt + \eps \cdot \opt \leq O(\eps)\cdot \E[|\MVC{\Gstar}|]. \qedhere
\]
\end{proof}

\eat{%

\section{The Vertex Seed Algorithm}
In this section, we describe the \vertexseedalg algorithm for stochastic vertex cover.
The \vertexseedalg algorithm will construct, solely as a function of the known graph $G$, a deterministic sequence $Q = (v_1, v_2, \ldots, v_k)$ of vertices. The intuition is that each vertex $v_i$ in $Q$ corresponds to a query of the type ``is $v\in \MVC{G^\star}$?''. If the answer is negative, i.e. $v\notin \MVC{G^\star}$,  then all neighbors of $v$ in $G^\star$ are necessarily in $\MVC{G^\star}$. Our goal in selecting $Q$ is to keep $|Q|$ small while revealing many vertices in $\MVC{G^\star}$ by virtue of being neighbors of vertices in $Q\setminus \MVC{G^\star}$. We describe a greedy algorithm to add vertices to $Q$ with these twin objectives: prefer vertices that are likely to have a large neighborhood in $G^\star$ but are themselves unlikely to be in $\MVC{G^\star}$.

Before defining the algorithm, we introduce some notation:
\begin{itemize}
    \item For a sequence of vertices $Q = (v_1, v_2, \ldots, v_k)$, and a fixed realization $G^\star$, define 
    \begin{align}
        \label{eqn:decided-definition}
        \decided(Q,G^\star) = \left\{v \in V: \exists v_i \in Q \text{ such that } v_i \not \in \MVC{G^\star} \text{ and $\{v_i, v\}$ is an edge in $G^\star$}\right\}.
    \end{align}
    A vertex $v \in \decided(Q, G^\star)$ must necessarily belong to $\MVC{G^\star}$. We further let $\undecided(Q, G^\star) = V \setminus \decided(Q, G^\star)$.
    \item For a sequence of vertices $Q = (v_1, v_2, \ldots, v_k)$, and a fixed realization $G^\star$, define
    \begin{align}
        \label{eqn:problematic-definition}
          \problematic(Q, G^\star) = \left\{v\in V : v \not \in \MVC{G^\star} \mbox{ and } |N_G(v) \cap \undecided(Q, G^\star)| \geq  \frac{1}{p\cdot \gamma}  \right\}\,.
    \end{align}
          In words, a vertex $v$ is in $\problematic(Q, G^\star)$ if $v \not \in  \MVC{G^\star}$ and has at least $1/(p\cdot \gamma)$ neighbors in $G$ that are undecided with respect to $Q$ and $G^\star$. Observe that if a vertex $v \in \problematic(Q, G^\star)$ were to be appended to $Q$, then all its previously undecided neighbors (in expectation $1/\gamma$ many) become decided. %
\end{itemize}

The \vertexseedalg algorithm starts with $Q$ being empty and then constructs it iteratively by adding vertices that are likely to be problematic. The pseudocode is as follows:
\begin{center}
    \begin{minipage}{0.95\textwidth}
        \begin{mdframed}[hidealllines=true, backgroundcolor=gray!20]
            \textbf{The \vertexseedalg Algorithm:}\\[-2mm]

            Initialize $Q$ to be the empty sequence.\\[-2mm]

            While there is a vertex $v$ such that $\Pr_{\Gstar}[v\in \problematic(Q, G^\star)] \geq \delta$, append $v$ to $Q$.
        \end{mdframed}
    \end{minipage}
\end{center}

When analyzing the algorithm,
we use the notation $Q_i = (v_1, \ldots, v_i)$ to denote the set of vertices in $Q$ after $i$ iterations of the \vertexseedalg algorithm; $Q_0$ denotes the initial empty set, and $Q_k$ denotes the final set if the algorithm terminates after $k$ iterations.

\paragraph{Parameter setting.} The parameter $\delta$ controls the probability that a vertex is problematic; we will set $\delta \coloneqq \eps$. 
The parameter $\gamma$ controls the number of undecided neighbors that a problematic vertex should have: this latter number is $1/\gamma$ in expectation. We set $\gamma \coloneqq \epsilon^2 \cdot \delta/20 = \epsilon^3/20$ to ensure that the number of seed vertices are at most $O(\epsilon^2 n)$ in expectation. We prove this next.

\subsection{Bounding the number of vertices in $Q$}

\begin{lemma}
    Let $Q = (v_1, \ldots, v_k)$ be the sequence of queries constructed by the \vertexseedalg algorithm. We have that $k \leq \frac{10 \gamma}{\delta} \cdot n = \frac{\epsilon^2}{2} n$.
\end{lemma}
\begin{proof}
    For $i=1, \ldots, k$, let $X_i$ be the random indicator variable that is $1$ if $v_i \in \problematic(Q_{i-1}, G^\star)$ and $0$ otherwise. By the definition of the \vertexseedalg algorithm, we have that $\E[X_i] \ge \delta$
    since otherwise $v_i$ would not have been appended to $Q$.
    Let $X = \sum_{i=1}^k X_i$. Thus, $\E\left[ X \right] = \sum_{i=1}^k \mathbb{E}[X_i] \geq k \cdot \delta$. 
    
    Now let $\mathcal{G}$ be the subset of realizations of $G^\star$ for which  $X \geq u\cdot \delta/2$, where $u = \frac{10 \cdot n\cdot  \gamma}{\delta}$. 
    Denoting explicitly by $X(G^\star)$ the value of $X$ in the realization $G^\star$, we have that 
    \begin{align}\label{eq:Q-bound}
        k\cdot \delta \leq \E\left[ X \right] = \E\left[X(G^\star)\right] \leq u\cdot \delta/2  + k \cdot \Pr[G^\star \in \mathcal{G}].
    \end{align}
    We complete the proof by showing that $\Pr[G^\star \in \mathcal{G}] \leq \delta/2$ which implies $k \leq u  = 10 n \gamma/\delta$ as required.

    To show this, we partition $\mathcal{G}$ as follows: for each $S \subseteq Q$ such that $|S| \ge u \delta/2$, we let $\mathcal{G}_S \subseteq \mathcal{G}$ be the realizations of $G^\star$ for which $\{v_i \in Q \;|\; v_i \in \problematic(Q_{i-1}, G^\star)\}$ equals $S$. %
    Notice that this partitions $\mathcal{G}$ since, by definition, every realization $G^\star \in \mathcal{G}$ has $X(G^\star) \geq u \cdot \delta/2$ and $X(G^\star) = |\{v_i \in Q : v_i \in \problematic(Q_{i-1}, G^\star)\} |$.
    We thus have that $\mathcal{G}$ is partitioned into at most $\sum_{\ell=u\delta/2}^k \binom{k}{\ell} \le 2^k$ many sets. %

    We proceed to bound $\Pr[G^\star \in \mathcal{G}_S]$  for a fixed set $S$. To do this, the following viewpoint for generating a realization $G^\star$ will be helpful. There is a random string $R = (r_1, r_2,\ldots)$ of bits where each $r_i$ is a random bit that is $1$ with probability $p$ and $0$ otherwise. The realization $G^\star$ can be generated as follows:
    \begin{itemize}
        \item For each $v_i \in S$ (in the order they were added to $Q$), the existence of an edge $e=(v_i, v)$ in $G^\star$, where $v$ is a neighbor of $v_i$ in $G$, $v\not \in (v_1, \ldots, v_{i-1})$ and $v$ is not incident to a confirmed edge to one of the vertices in $(v_1, \ldots, v_{i-1}) \cap S$ is determined by the next random bit in $R$.
        \item Any remaining edge not considered above is simply realized with probability $p$ independently.
    \end{itemize}

    The above procedure simply realizes each edge with probability $p$ independently but distinguishes (as a function of $S$) certain random bits with $R$.

    Now, we claim that if any $G^\star$ realized this way ends up belonging to $\mathcal{G}_S$, then we must have used at least $5n/p$ bits from $R$. To see this, consider the realization of edges adjacent to some fixed vertex $v_i \in S$ in the process above: because the realized graph $G^\star \in \mathcal{G}_S$, it must be the case that $v_i \in \problematic(Q_{i-1}, G^\star)$ for the resulting $G^\star$. This means that $v_i$ had at least $1/(p\cdot \gamma)$ many neighbors in $G$ that will be in $\undecided(Q_{i-1}, \Gstar)$. Notice that by the definition of a $\problematic$ vertex, none of the vertices in $S$ are in $\MVC{\Gstar}$. Therefore, if a vertex $v$ is in $\undecided(Q_{i-1}, \Gstar)$, it cannot have a realized edge to any of the vertices in $Q_{i-1} \cap S$. As a result, $v_i$ will have at least
    $1/(p\cdot \gamma)$ neighbors in $G$ that do not have a realized edge towards $Q_{i-1}\cap S$.
    For any such neighbor $v$ of $v_i$, the realization process would have used a new bit from $R$ to confirm the presence of the edge $(v_i, v)$. In total, the number of potential edges determined using the random bits from $R$ is thus at least
    \[
        \frac{|S|}{p\cdot \gamma}  \geq  \frac{u \cdot \delta/2}{p\cdot \gamma} = \frac{5 n}{p}\,.
    \]
    Observe also that if an edge $(v_i,v)$ is confirmed using a bit from $R$, then the presence of no other edge $(v_j,v)$ (for $v_j$ after $v_i$ in $S$) is ever determined using a subsequent bit from $R$, by definition of the realization process. This means that the number of confirmed edges using bits from $R$ can be at most the number of distinct vertices in $G$, which is $n$. In summary, we conclude that, if the realized graph $G^\star \in \mathcal{G}_S$, then it must be the case that we used at least $5n/p$ bits from $R$, and at most $n$ of these bits were one. So, consider the first $5n/p$ bits in $R$. As the expectation of the number of ones in the independent Bernoulli trials $r_1, \ldots, r_{5n/p}$ is $5n$, we have by a standard Chernoff bound that
    \begin{align*}
        \Pr[G^\star \in \mathcal{G}_S] \leq e^{- n}.
    \end{align*}
    Finally, by a union bound over the partitioning of $\mathcal{G}$,
    \begin{align*}
        \Pr[G^\star \in \mathcal{G}] \leq  2^k\cdot e^{-n} \leq \delta/2,    
    \end{align*}
    for sufficiently large $n$. \jan{why is $2^k\cdot e^{-n} \leq \delta/2$?}\jan{oh, is it by $k\le n$ so $(2/e)^n\to 0 \le \delta/2$?}\ola{Yes this is sloppily written...} So, by Eq.~\eqref{eq:Q-bound}, $k \le \epsilon^2n/2$ as required.     
    
\end{proof}

}%

\subsection{Using \vertexseedalg to analyze \ouralg}

\newcommand{\optS}{\hat{\thesymbol}}

\noindent In this section we prove our main result about the performance of \ouralg algorithm,
which we formally state in the next theorem.

\begin{theorem}\label{thm:ouralg-1-plus-epsilon}
    The output of the \ouralg algorithm has an expected size of at most $(1+O(\eps))\cdot\opt$.
\end{theorem}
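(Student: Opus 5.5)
The plan is to exhibit, for the optimization problem~\eqref{problem:best-S}, a feasible deterministic set $\detsetvar$ whose objective value is at most $(1+O(\eps))\cdot\opt$. Since $\detset$ minimizes the objective, and the algorithm's expected output size is at most $|\detset| + \E[|\MVC{\Gstar[V\setminus \detset]}|]$, this proves the theorem.

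\textbf{Step 1: the number of uncovered edges.} First I would show that, for every realization, the edges of $G$ not covered by $\SEED(\Gstar)$ number at most $2\cdot 10^3 n/(\eps^5 p)$. There are two kinds of such edges.
\begin{itemize}
\item Edges with an endpoint in $S$, the other endpoint in $S\cup M$. Using $c_v\le \eps$ for $v\in S$, each such edge must be covered with probability bounded below. From this I would bound their count by $O(n/(\eps p))$, roughly as follows: a vertex of $S$ with degree $\gg 1/p$ into $S\cup M$ would, together with its neighbourhood, force too much expected cover mass.
\item Edges inside $M\setminus \SEED(\Gstar)$. Every such vertex lies outside $Q$, outside $A$ and outside $\decided$, hence in $\undecided(Q,\Gstar[M])$. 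Because it is not in $A$, it has fewer than $1/(p\gamma)$ undecided $G[M]$-neighbours, which gives at most $n/(p\gamma)=10^3 n/(\eps^5 p)$ edges.
\end{itemize}
So every ``form'' of $\SEED(\Gstar)$ is feasible for~\eqref{problem:best-S}. By Lemma~\ref{lem:seed-value-is-1-plus-epsilon}, the adaptive quantity $\E\big[|\SEED(\Gstar)|+|\MVC{\Gstar[V\setminus \SEED(\Gstar)]}|\big]$ is at most $(1+O(\eps))\opt$. This uses $|\MVC{\Gstar[V\setminus T]}|\le |\MVC{\Gstar}\setminus T|$ and the fact that $\SEED\cap\MVC{\Gstar}$ and $\MVC{\Gstar}\setminus\SEED$ add up to $\opt$.

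\textbf{Step 2: removing adaptivity.} Condition on the realization $F$ of all edges incident to $Q$. Given $F$, the set $\SEED(\Gstar)$ is determined by $L$, $Q$, the set $Q\cap\MVC{\Gstar}$ (at most $2^{|Q|}\le 2^{\eps^2\opt/100}$ choices by Claim~\ref{cl:Q}), and $A$. Here $A$ depends only on the undecided set, which is determined by $F$ and $Q\cap\MVC{\Gstar}$. Hence, given $F$, $\SEED$ ranges over a family $\mathcal{T}_F$ of at most $2^{\eps^2\opt/100}$ deterministic sets $T$. Each $T\supseteq Q$, so $\MVC{\Gstar[V\setminus T]}$ depends only on edges not incident to $Q$, which are independent of $F$.

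For each $T\in\mathcal{T}_F$, let $\mu_T=\E[|\MVC{\Gstar[V\setminus T]}|]$. Applying Theorem~\ref{thm:MVC-concentration} to the base graph $G[V\setminus T]$ with $t=\eps\opt+\eps\mu_T$ gives failure probability $\exp(-\Omega(\eps^2\opt))$. This beats the union bound over $2^{\eps^2\opt/100}$ sets once the constants are set and the assumption $\opt\ge c\log(1/\eps)/\eps^2$ is used. So with high probability every $T\in\mathcal{T}_F$ simultaneously has $|\MVC{\Gstar[V\setminus T]}|\ge(1-\eps)\mu_T-\eps\opt$. Let $T^*_F$ minimize $|T|+\mu_T$. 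Then
\[
\E\big[|\SEED|+|\MVC{\Gstar[V\setminus\SEED]}|\,\big|\,F\big]\ \ge\ (1-\eps)(|T^*_F|+\mu_{T^*_F})-\eps\opt-\Pr[\text{fail}]\cdot n .
\]
The failure term needs care: I would bound it by crude quantities of order $\opt/\eps$ (for instance, $|M|+|L|$ dominating the relevant ranges) times an exponentially small probability, so it is absorbed into $\eps\opt$.

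\textbf{Step 3: averaging over $F$.} The left-hand side above, averaged over $F$, is at most $(1+O(\eps))\opt$. Hence there is some $F$ with $|T^*_F|+\mu_{T^*_F}\le(1+O(\eps))\opt$. The set $T^*_F$ is a deterministic feasible $\detsetvar$, which finishes the proof.

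The main obstacle is Step 2: making the conditioning and the concentration interact correctly. One needs that $\mu_T$ and the concentration statement refer to the same distribution after conditioning on $F$, which holds because $T\supseteq Q$. One also needs the exceptional event to contribute only $O(\eps)\opt$ even though its size bound is in terms of $n$. I would first try to control it using $|\SEED|\le|L|+|M|\le O(\opt/\eps)$ together with $\mu_T\le |M\setminus T|+|S|$-type bounds, falling back on restricting to $M$ if needed.
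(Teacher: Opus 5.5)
Steps 2 and 3 are essentially the paper's argument. You condition on the realization of the edges at $Q$, observe that every candidate set contains $Q$ so the residual cover is independent of $\Fstar$, union-bound Theorem~\ref{thm:MVC-concentration} over the at most $2^{|Q|}$ forms of $\SEED$, and then average. Two small differences from the paper are harmless: you pick one good $\Fstar$ and use $T^*_F$ directly instead of passing through the constrained problem~\eqref{problem:best-S-with-Q}, and you take $t=\eps(\opt+\mu_T)$ directly. The failure term is not a real obstacle. Since $g\ge 0$, you get $\E[g(\SEED)\mid \Fstar]\ge(1-\Pr[\text{fail}])\bigl((1-\eps)(|T^*_F|+\mu_{T^*_F})-\eps\opt\bigr)$, which is only a multiplicative loss. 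Alternatively, $\mu_T\le\opt$ (an induced subgraph has no larger minimum cover) together with $T\subseteq L\cup M$ gives $|T|+\mu_T=O(\opt/\eps)$. A bound involving $|S|$ would not help, since $|S|$ can be far larger than $\opt$.

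The genuine gap is the first bullet of Step 1. You state the wrong property. What is needed is that for $u\in S$ and $v\in S\cup M$ we have $c_e\le c_u+c_v\le 1-\eps$, i.e.\ the edge is \emph{uncovered} by $\MVC{\Gstar}$ with probability at least $\eps$.

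More seriously, a per-vertex argument cannot work. If $u\in S$ has $d$ neighbours in $M$, then when $u\notin\MVC{\Gstar}$ only its roughly $pd$ realized neighbours are forced into the cover. Each of those neighbours can have cover probability up to $1-2\eps$, so no contradiction arises. In fact per-vertex degree bounds are false. Take $p=1-3\eps$ and place each neighbour $v_i$ of $u$ in its own copy of $K_{k,k}$. Including $v_i$ costs nothing unless that copy lacks a perfect matching, which has probability $O_k(\eps^k)$. When $(u,v_i)$ is unrealized and the copy has a perfect matching, the other side of the copy is also a minimum cover. With suitable tie-breaking in the fixed map $\MVC{\cdot}$, this gives $u\in S$ and all $v_i\in M$ with $d=\Omega_k(\eps^{1/2-k})\gg 1/(\eps p)$.

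The bound holds only globally, and the missing idea is the paper's counting argument:
\begin{itemize}
\item Let $W=\{e: c_e\le 1-\eps\}$ and let $X$ be the number of edges of $G$ left uncovered by $\MVC{\Gstar}$. Then $\E[X]\ge\eps|W|$.
\item Every uncovered edge lies in the induced subgraph $G[V\setminus\MVC{\Gstar}]$ and must be unrealized.
\item For any fixed vertex set inducing more than $n/p$ edges, the probability that all of them are unrealized is at most $(1-p)^{n/p}\le e^{-n}$.
\item A union bound over the $2^n$ vertex sets gives $\E[X]\le n/p+n^2(2/e)^n$, hence $|W|=O(n/(\eps p))$.
\end{itemize}
With this replacement, the rest of your proof goes through.
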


To analyze the \ouralg algorithm, we first define an auxiliary problem (Problem~\eqref{problem:best-S-with-Q}), which strengthens the constraints of Problem~\eqref{problem:best-S} by requiring that the solution $S$ includes the vertices in $Q$, where $Q$ is the deterministic output
of the \vertexseedalg algorithm on $G[M]$. For the remaining of the section we will use $\optS$ to denote the optimal solution to Problem~\eqref{problem:best-S-with-Q}.

\begin{equation}\label{problem:best-S-with-Q}
\begin{aligned}
\min_{\detsetvar \subseteq V}\quad
  & |\detsetvar| \;+\; \mathbb{E}\!\left[\;\bigl|\,\MVC{\Gstar[V\setminus \detsetvar]}\,\bigr|\;\right] \\
\text{s.t.}\quad
  & Q \subseteq \detsetvar, \\
  & G[V\setminus \detsetvar]\; \text{ has at most }\; 2\cdot\frac{10^3\;n}{\eps^{5}p}\; \text{edges.}
\end{aligned}
\end{equation}

Our analysis compares the cost of $\optS$ to the expected cost of a strategy that adaptively picks $S = \SEED(\Gstar)$ for every realization $\Gstar$. We begin by proving that for any fixed realization $\Gstar$, the set $\SEED(\Gstar)$ is a feasible solution to \eqref{problem:best-S-with-Q} as $Q\subseteq \SEED(\Gstar)$, by definition, and by showing that the induced graph $G[V\setminus\SEED(\Gstar)]$ is sparse (Lemma~\ref{lem:v-minus-seed-is-sparse}).
Moreover, Lemma~\ref{lem:seed-value-is-1-plus-epsilon} implies that $\E[|\SEED(\Gstar)| + |\MVC{\Gstar[V\setminus\SEED(\Gstar)]}|] = (1+O(\eps))\cdot\opt$, therefore adaptively picking $\SEED(\Gstar)$ as a solution for each $\Gstar$, would achieve an expected objective value of $(1+O(\eps))\cdot\opt$ in Problem~\eqref{problem:best-S-with-Q}.
Our goal is to prove that the optimal static solution $\optS$ performs essentially as well.

For any realization $\Gstar$, the set $\SEED(\Gstar)$ depends only on two quantities: (a) on the realization of edges incident to $Q$ and (b) on the intersection of $Q$ with $\MVC{\Gstar}$.
Let $F$ be the set of edges, of the base graph $G$, which have at least one endpoint in $Q$ and let also $\Fstar \subseteq F$ denote a fixed
realization of them. 
For convenience, we slightly abuse notation and index each possible $\SEED$ set by the tuple $(Q_{VC},\Fstar)$ that determines it.
For the sake of completeness, we re-state the formal definitions of $\decided$, $\undecided$ and $\SEED$ in this new notation. For a fixed $\Fstar \subseteq F$ and a fixed $Q_{VC} \subseteq Q$, we have

\begin{itemize}
    \item $\decided(Q_{VC}, \Fstar) = \{u \in M\setminus Q: \exists(u,v) \in \Fstar \text{ s.t. } v \in (Q\setminus Q_{VC})\}$
    \item $\undecided(Q_{VC}, \Fstar) = M \setminus (Q \cup \decided(Q_{VC}, \Fstar))$ 
    \item $A( Q_{VC}, \Fstar) = \{u \in M \setminus Q:\; |N_{G[M]}(u)\cap\undecided( Q_{VC}, \Fstar)| \geq 1/(p\cdot \gamma)\}$
    \item $\SEED(Q_{VC}, \Fstar) = L \cup Q \cup \decided(\Fstar, Q_{VC})\cup A(Q_{VC}, \Fstar)$.
\end{itemize}
\noindent Notice that for any realization $\Gstar$, for which $\Fstar$ is the realization of the edges in $F$, we have that $\decided(Q \cap \MVC{\Gstar}, \Fstar) = \decided(Q, \Gstar[M])$ and $\SEED(\Gstar) = \SEED(Q\cap\MVC{\Gstar}, \Fstar)$.

A key observation is that the objective of Problem~\ref{problem:best-S-with-Q} is independent of how the edges in $F$ realize, because every feasible solution $S$ contains the set $Q$, and thus
the edges in $F$ are not part of the induced graph $G[V\setminus S]$. We can therefore fix a realization $\Fstar$ of these edges and compare $\optS$ and $\SEED(\Gstar)$ over the randomness outside of $F$. For a fixed realization $\Fstar$, $\SEED$ only depends on $Q_{VC}$, which ranges over subsets of $Q$ and can, thus, take at most $2^{|Q|}$ values.

By employing the concentration of the minimum vertex cover (Theorem~\ref{thm:MVC-concentration})
and taking a union bound over the $2^{|Q|}$ possible values of $\SEED$, we are able to show that
the value of $\SEED$ that has the minimum expected cost (for the fixed $\Fstar$ and over the randomness outside of $F$) is almost as good as adaptively selecting $\SEED(\Gstar)$. Since each value of $\SEED$ is feasible for Problem~\ref{problem:best-S-with-Q} and the objective only depends on the randomness outside $F$, we can conclude that $\optS$ cannot be worse than the best fixed $\SEED$.

Finally, since the above holds for any fixed $\Fstar$, we conclude the proof taking the expectation over the randomness in $F$ and showing that the expected cost of $\optS$ almost as good as the expected
cost of $\SEED(\Gstar)$.

\vspace{1em} We will now formalize the proof outlined above. First, we begin by proving that for any $\Fstar \subseteq F$ and any $Q_{VC} \subseteq Q$, the induced
graph $G[V\setminus\SEED(Q_{VC}, \Fstar)]$ is sparse. This proves that $\SEED(Q_{VC}, \Fstar)$ is a feasible solution to Problem~\eqref{problem:best-S-with-Q}.

\begin{lemma}\label{lem:v-minus-seed-is-sparse}
    For any $\Fstar \subseteq F$ and any $Q_{VC} \subseteq Q$, the graph $G[V \setminus \SEED(Q_{VC}, \Fstar)]$ has at most $\frac{2\cdot10^3\cdot n}{p \eps^5}$ many edges.
\end{lemma}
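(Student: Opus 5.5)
The plan is to classify the edges of $G[V\setminus \SEED(Q_{VC},\Fstar)]$ by where their endpoints lie in the partition $V = L\cup M\cup S$, and to bound each class separately. Since $L\subseteq \SEED$, every surviving edge has both endpoints in $M\cup S$. There are two kinds: edges with at least one endpoint in $S$, and edges inside $M\setminus \SEED$. I will show that each kind contributes at most $\frac{10^3 n}{p\eps^5}$ edges, which gives the claimed bound of $\frac{2\cdot 10^3 n}{p\eps^5}$.

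\textbf{Edges inside $M$.} Let $u,w\in M\setminus\SEED(Q_{VC},\Fstar)$ with $(u,w)\in E$. Neither vertex is in $Q$ or in $\decided(Q_{VC},\Fstar)$, so both lie in $\undecided(Q_{VC},\Fstar)$. Neither vertex is in $A(Q_{VC},\Fstar)$, so each has fewer than $1/(p\gamma)$ neighbors in $G[M]$ that lie in $\undecided$. In particular, every vertex of $M\setminus\SEED$ has fewer than $1/(p\gamma)$ neighbors inside $M\setminus\SEED$. Summing over the at most $n$ such vertices gives fewer than $n/(p\gamma)=10^3 n/(p\eps^5)$ edges. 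I will take this as the bound for this part, even though it is not tight.

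\textbf{Edges touching $S$.} These are the edges inside $S$ and the edges between $S$ and $M$. The key observation is that every such edge $e=(u,v)$ has $c_u\le\eps$, and $c_v<1-2\eps$ whenever $v\in M\cup S$. If $e$ is realized, it must be covered. Hence
\[
p \le \Pr[e\in\Gstar] \le \Pr[u\in\MVC{\Gstar}] + \Pr[v\in\MVC{\Gstar}].
\]
Summing this over all such edges does not work directly, since a vertex of high degree is counted many times. The cleaner route is to count, for each vertex $v$, its realized edges to $S$ that are not covered by $S$'s side. Here is the argument I plan to use. Take $u\in S$. Conditional on $u\notin\MVC{\Gstar}$, which has probability at least $1-\eps$, every realized neighbor of $u$ must be in $\MVC{\Gstar}$. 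So
\[
\E\bigl[|N_{\Gstar}(u)\setminus \MVC{\Gstar}|\bigr] \le \eps\, \E\bigl[|N_{\Gstar}(u)|\bigr] = \eps p \deg(u),
\]
which does not bound the degree directly. Instead, I will double count the pairs $(u,v)$ with $u\in S$, $v\in N_{\Gstar}(u)$ and $u\notin\MVC{\Gstar}$. Each such $v$ lies in $\MVC{\Gstar}$. A vertex $v\in M\cup S$ has $\Pr[v\notin\MVC{\Gstar}]\ge 2\eps$, and when $v\notin\MVC{\Gstar}$ all of its realized $S$-neighbors are in the cover.

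So I expect the main obstacle to be getting a clean bound of $O(n/(p\eps))$ for the edges touching $S$, via an argument of this type: $\E[\#\text{realized edges in } G[M\cup S] \text{ touching } S \text{ with the } S\text{-endpoint outside } \MVC{\Gstar}]$. This quantity is at least $(1-\eps)\,p\,|E_S|$ minus a correction for the edges whose $S$-endpoint is in the cover. Those edges are charged to $\E\bigl[\sum_{u\in S\cap\MVC{\Gstar}}\deg_{\Gstar}(u)\bigr]$, which is awkward to control. On the other hand, the counted quantity is at most $\E[\sum_{v\in\MVC{\Gstar}} |N_{\Gstar}(v)\setminus\MVC{\Gstar}|]$. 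That sum counts edges between the cover and its complement, and these can be numerous, so this is the delicate step.

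My fallback is to use the claim stated in the introduction: the edges inside $S$ and between $S$ and $M$ number only $O_\eps(n/p)$. I would try to prove it by a matching-type argument. A set of realized edges among vertices each of probability at most $1-\eps$ is covered by a cover whose expected size is comparable to that of a maximal matching. I would then compare this with the degree sum using Chernoff bounds on the realized degrees of the high-degree vertices, and finally absorb the resulting $O(n/(p\eps))$ bound into the remaining $10^3 n/(p\eps^5)$ budget.
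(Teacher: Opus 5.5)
Your treatment of the edges inside $M$ is correct and matches the paper: $M\setminus\SEED(Q_{VC},\Fstar)\subseteq\undecided(Q_{VC},\Fstar)$. Each vertex there lies outside $A(Q_{VC},\Fstar)$, so it has fewer than $1/(p\gamma)$ undecided neighbours in $G[M]$. This gives at most $n/(p\gamma)=10^3n/(p\eps^5)$ such edges.

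\textbf{The gap is the other half: you never actually bound the edges touching $S$.}
\begin{itemize}
\item The inequality $p\le c_u+c_v$ is true but vacuous.
\item The double count over \emph{realized} edges is circular. The quantity you want to lower bound, the realized edges leaving $S\setminus\MVC{\Gstar}$, is at best about $p$ times the number of edges you are trying to bound. The only upper bound you have for it is the number of realized edges between the cover and its complement, and as you note yourself, that is not $O(n/p)$ in general (think of a dense bipartite $G$).
\item The fallback of citing the introduction's claim is also circular, since that sentence is an informal summary of exactly this lemma.
\item The matching/Chernoff sketch does not indicate how the degree sum would be controlled.
\end{itemize}

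\textbf{The missing idea is to count \emph{unrealized, uncovered} edges instead of realized ones.} Let $X$ be the number of edges of $G$ with neither endpoint in $\MVC{\Gstar}$.
\begin{itemize}
\item Lower bound. For $e=(u,v)$ with $u\in S$ and $v\in M\cup S$, we have $c_e\le c_u+c_v\le\eps+(1-2\eps)=1-\eps$. So each such edge contributes at least $\eps$ to $\E[X]$.
\item Upper bound. $V\setminus\MVC{\Gstar}$ is an independent set of $\Gstar$, so all $X$ of these edges are absent from $\Gstar$. A fixed vertex set spanning more than $n/p$ edges of $G$ is independent in $\Gstar$ with probability at most $(1-p)^{n/p}\le e^{-n}$. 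A union bound over the $2^n$ vertex sets gives $X\le n/p$ except with probability $(2/e)^n$. Hence $\E[X]\le n/p+n^2(2/e)^n\le n/p+6$.
\end{itemize}
Combining the two, there are at most $(n/p+6)/\eps\le 7n/(p\eps)\le 10^3n/(p\eps^5)$ edges touching $S$. This fills the second half of the budget and completes the lemma. Without this argument, or an equivalent one, your proposal establishes only the $M$-part.
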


\begin{proof}[Proof of Lemma~\ref{lem:v-minus-seed-is-sparse}]

Since $\SEED(Q_{VC}, \Fstar)$ contains $L$, all vertices in $V\setminus \SEED(Q_{VC}, \Fstar)$ lie in $(M \cup S) \setminus \SEED(Q_{VC}, \Fstar)$. Moreover, since $\SEED(Q_{VC}, \Fstar)$ also contains $\decided(Q_{VC}, F^\star)$,
\[
M\setminus \SEED(Q_{VC}, \Fstar) \;\subseteq\; \undecided(Q_{VC}, \Fstar).
\]

Now, every vertex \(u \in M \setminus \SEED(Q_{VC}, \Fstar)\) has at most \(1/(p\cdot\gamma)\) neighbors inside \(\undecided(Q_{VC}, \Fstar)\) (since otherwise, such a vertex would be included in $A(Q_{VC}, \Fstar)$). Hence the number of edges in
\(G[V\setminus \SEED(Q_{VC}, \Fstar)]\) that have both endpoints in \(M\) is at most
\[
\frac{|M|}{p\gamma} \;\leq\; \frac{10^3\;n}{p\eps^5},
\]
since \(\gamma=\eps^5/10^3\).

The remaining edges in \(G[V\setminus \SEED(Q_{VC}, \Fstar)]\), are either between $S$ and $M$ or entirely within $S$. Fix any such edge \(e=(u,v)\) with \(u\in S\) and \(v\in (M\cup S) \setminus \SEED(Q_{VC}, \Fstar)\), and let \(\MVC{\Gstar}\) be the minimum vertex cover of a realization \(\Gstar\). Let
\[
c_e \;:=\; \Pr[e \text{ is covered by } \MVC{\Gstar}]
  \;\le\; \Pr[u\in \MVC{\Gstar}]+\Pr[v\in \MVC{\Gstar}]
  \;\le\; \eps+1-2\eps =1-\eps.
\]

As proven in \cite{derakhshan2023stochastic}, there can be at most $O(n/(\eps\cdot p))$ such edges $(u,v)$ in $G$.
We also provide a concise proof here for completeness. Let \(W=\{e\in E: c_e\le 1-\eps\}\). Set \(X=|\{e\in E: e \text{ is \emph{not} covered by }\MVC{\Gstar}\}|\).
By linearity of expectation,
\begin{equation}\label{eq:EX-lb}
\E[X] \;=\; \sum_{e\in E}(1-c_e)\;\ge\;\sum_{e\in W}(1-c_e)\;\ge\;\eps\,|W|.
\end{equation}

To upper bound \(\E[X]\), observe that every uncovered edge under \(\MVC{\Gstar}\) must be absent from \(\Gstar\) (otherwise \(\MVC{\Gstar}\) would not be a vertex cover of \(\Gstar\)).
The probability that a vertex-induced subgraph of $G$ has more than $n/p$ unrealized edges is at most
$(1-p)^{n/p} \le e^{-n}$. Also, there are at most $2^n$ choices for the vertex induced subgraph $H = G[V\setminus \MVC{\Gstar}]$. Therefore,

$$\Pr\left[H \text{ has more than } n/p \text{ edges}\right] \leq (2/e)^{n}\;.$$

Using the above, we can upper bound  \(\E[X]\) as
\begin{equation}\label{eq:EX-ub}
\E[X]\;\le\; \frac{n}{p} + n^2\; \left(\frac{2}{e}\right)^n \leq \frac{n}{p} + 6.
\end{equation}

Combining \eqref{eq:EX-lb} and \eqref{eq:EX-ub} yields the deterministic bound
\[
|W| \;\le\; \frac{\E[X]}{\eps}\;\le\; \frac{n}{p\;\eps} + \frac{6}{\eps} \leq 2\;\frac{n}{p\;\eps} \leq 2\;\frac{n}{p\eps^5}.
\]
Therefore the total number of remaining edges is at most $2\cdot10^3\;n/(p\eps^5)$ which concludes the proof of the lemma.
\end{proof}

For convenience, we define $g(A)$, for any set $A \subseteq V$, to be the random variable corresponding to the size of a vertex cover of
$\Gstar$ that includes the set $A$ and covers the induced graph $\Gstar[V\setminus A]$ optimally, i.e.,
$$g(A) = |A| + |\MVC{\Gstar[V\setminus A]}|.$$

\noindent Notice that the objective values of the optimization problems \eqref{problem:best-S} and \eqref{problem:best-S-with-Q} are equal to $\E[g(S)]$.
Before we continue with the analysis, we will prove the following tail bound for $g(S)$, by using the concentration of the minimum vertex cover (Theorem~\ref{thm:MVC-concentration}).

\begin{lemma}\label{lem:concentration-g(s)}
    For any set $S \subseteq V$ and any $\eps \in [0,1]$, the following holds
    $$
    \Pr\left[\;|g(S) - \E[g(S)]| > \eps\; \E[g(S)]\;\right] \leq 2\;e^{-\eps^2\;\opt/66}.
    $$
\end{lemma}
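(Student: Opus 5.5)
Since $|S|$ is deterministic, $g(S) - \E[g(S)] = Z_S - \E[Z_S]$ with $Z_S := |\MVC{\Gstar[V\setminus S]}|$. The plan is to apply Theorem~\ref{thm:MVC-concentration} to $Z_S$ and then convert its tail bound, which is stated in terms of $\E[Z_S]$, into one in terms of $\E[g(S)]$ and finally $\opt$.

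First observe that $\Gstar[V\setminus S]$ is distributed exactly as $(G[V\setminus S])_p$: each edge of the base graph $G[V\setminus S]$ is realized independently with probability $p$. Hence Theorem~\ref{thm:MVC-concentration} applies verbatim to the base graph $G[V\setminus S]$. Writing $\mu_S := \E[Z_S]$, it gives for every $t\ge 0$
\[
\Pr\bigl[\abs{Z_S-\mu_S}\ge t\bigr] \le 2\exp\paren{-\frac{t^2}{4C\mu_S + 2t/3}}.
\]

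Set $t = \eps\,\E[g(S)] = \eps(|S|+\mu_S)$. The exponent $t^2/(4C\mu_S+2t/3)$ is increasing in $t$ and decreasing in $\mu_S$. Since $\mu_S \le \E[g(S)]$ and $\eps\le 1$, we get
\[
4C\mu_S + \tfrac{2t}{3} \le \paren{4C+\tfrac{2}{3}}\E[g(S)].
\]
Therefore
\[
\frac{t^2}{4C\mu_S+2t/3} \ge \frac{\eps^2\,\E[g(S)]}{4C+2/3} \ge \frac{\eps^2\,\E[g(S)]}{33},
\]
using $C<8$. Finally, $g(S)$ is always a feasible vertex cover of $\Gstar$, so $g(S)\ge |\MVC{\Gstar}|$ pointwise and $\E[g(S)]\ge\opt$. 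This yields the bound $2e^{-\eps^2\opt/33}$, which is even stronger than the claimed $2e^{-\eps^2\opt/66}$. The slack of a factor $2$ comfortably absorbs the strict-versus-non-strict inequality in the event.

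I expect the only real subtlety to be justifying that Theorem~\ref{thm:MVC-concentration} applies to the induced realized graph. This holds because the theorem is stated for an arbitrary base graph, and $\opt$ there means the expected MVC of that base graph, namely $\mu_S$. The degenerate case $\mu_S=0$ is trivial, since then $Z_S=0$ almost surely.
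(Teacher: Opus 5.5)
Your proof is correct, and it is a more direct route than the paper's.

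The paper sets $X=|\MVC{\Gstar[V\setminus S]}|$ and splits into two cases.
\begin{itemize}
\item If $\E[X]\ge\opt/2$, it takes deviation $\eps\E[X]$ and applies Corollary~\ref{cor:mvc-concentration}. This case is where the constant $66$ comes from.
\item If $\E[X]\le\opt/2$, it uses $|S|+\E[X]\ge\opt$ to lower-bound the deviation by $\eps\opt$ and applies the corollary again.
\end{itemize}

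You instead put $t=\eps\,\E[g(S)]$ directly into the Bernstein-type bound of Theorem~\ref{thm:MVC-concentration}. You then bound the whole denominator by $(4C+2/3)\E[g(S)]$, using $\mu_S\le\E[g(S)]$ and $\eps\le 1$. This needs no case split and gives the better constant $33$.

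Your route also avoids a subtlety in the paper's second case. Corollary~\ref{cor:mvc-concentration} only applies when the deviation is at most the mean of the variable, i.e.\ $\eps\opt\le\E[X]$. That is not guaranteed when $0<\E[X]<\eps\opt$. In that regime one must return to Theorem~\ref{thm:MVC-concentration}, which gives an exponent of at least $\eps^2\opt/(2C+2/3)$ and so still suffices. Your argument does this uniformly.

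One small point: your remark that the factor-$2$ slack absorbs the strict-versus-non-strict inequality is unnecessary. The containment $\{|Z_S-\mu_S|>t\}\subseteq\{|Z_S-\mu_S|\ge t\}$ already suffices.
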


\begin{proof}[Proof of Lemma~\ref{lem:concentration-g(s)}]

Let $X:=|\MVC{\Gstar[V\setminus S]}|$. Since $|S|$ is deterministic,
\begin{equation}\label{eq:split}
\Pr\!\left[\,|g(S)-\E[g(S)]|>\eps\,\E[g(S)]\,\right]
=\Pr\!\left[\,|X-\E[X]|>\eps\,\E[X]+\eps\,|S|\,\right]. \tag{$\ast$}
\end{equation}

\noindent\textbf{Case (a): }$\E[X] \ge \opt/2$.
From \eqref{eq:split},
\[
\Pr\!\left[\,|X-\E[X]|>\eps\,\E[X]+\eps\,|S|\,\right]
\ \le\ \Pr\!\left[\,|X-\E[X]|>\eps\,\E[X]\,\right]
\ \le\ 2 e^{-\eps^{2}\,\E[X]/33}
\ \le\ e^{-\eps^{2}\,\opt/66},
\]
where we used the tail bound of the minimum vertex cover (Corollary~\ref{cor:mvc-concentration}) for $X$ and the fact that $\E[X] \ge \opt/2$.

\vspace{1em}
\noindent\textbf{Case (b): }$\E[X] \le \opt/2$.
Since $g(S)=|S|+X$ and $S\cup \MVC{\Gstar[V\setminus S]}$ is a vertex cover of $\Gstar$,
we have $g(S)\ge |\MVC{\Gstar}|$ for every realization, hence $\E[g(S)]\ge \opt$ and thus
$|S|+\E[X]=\E[g(S)]\ge \opt$.
Therefore, from \eqref{eq:split},
\begin{align*}
\Pr\!\left[\,|X-\E[X]|>\eps\,\E[X]+\eps\,|S|\,\right]
\ &\le\ \Pr\!\left[\,|X-\E[X]|>\eps\,\opt\,\right]\\
\ &=\ \Pr\!\left[\,|X-\E[X]|>\tfrac{\eps\,\opt}{\E[X]}\cdot \E[X]\,\right]\\
\ &\le\ 2 e^{-\eps^{2}\,\opt^{2}/(33\cdot\E[X])}\\
\ &\le\ 2 e^{-\eps^{2}\,\opt/66},
\end{align*}
where we again used concentration for $X$ and the bound $\E[X]\le \opt/2$.

\vspace{1em}

\noindent Combining the two cases yields the stated inequality.
\end{proof}

We are now ready to prove that, for any fixed realization $\Fstar$, $\optS$ is almost as good as adaptively selecting the best set among $\{\SEED(Q_{VC}, \Fstar)\;:\;Q_{VC}\subseteq Q\}$. We state this formally in the next lemma.

\begin{lemma}\label{lem:s-is-better-than-min-seed}
Let $\optS$ be an optimal solution to~\eqref{problem:best-S-with-Q}. Then, for any realization $\Fstar$,
\[
\E_{\Gstar\setminus \Fstar}\!\left[g\!\left(\optS\right)\right]
\;\le\; (1+O(\eps))\;\E_{\Gstar\setminus\Fstar}\!\left[\min_{Q_{VC}\subseteq Q}\, g\!\left(\SEED(Q_{VC}, \Fstar)\right)\right].
\]
\end{lemma}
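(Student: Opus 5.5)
Fix $\Fstar$ and write $\E$ for $\E_{\Gstar\setminus\Fstar}$. For each $Q_{VC}\subseteq Q$, let $T_{Q_{VC}} := \SEED(Q_{VC},\Fstar)$. This is a fixed, deterministic set once $\Fstar$ is fixed. By Lemma~\ref{lem:v-minus-seed-is-sparse} it is feasible for Problem~\eqref{problem:best-S-with-Q}, since it contains $Q$ and leaves at most $2\cdot 10^3 n/(\eps^5 p)$ edges. The objective $\E[g(\detsetvar)]$ of any feasible $\detsetvar \supseteq Q$ is unaffected by the edges in $F$, because those edges are not in $G[V\setminus \detsetvar]$. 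Hence optimality of $\optS$ gives
\[
\E[g(\optS)] \le \mu^\ast := \min_{Q_{VC}\subseteq Q} \E[g(T_{Q_{VC}})].
\]
It therefore suffices to show $\mu^\ast \le (1+O(\eps))\,\E[\min_{Q_{VC}} g(T_{Q_{VC}})]$. This is the usual ``min of expectations versus expectation of min'' gap, and I would close it by concentration plus a union bound over the $2^{|Q|}$ candidate sets.

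Define the good event $\mathcal{E}$: for every $Q_{VC}\subseteq Q$,
\[
g(T_{Q_{VC}}) \ge (1-\eps)\,\E[g(T_{Q_{VC}})].
\]
Lemma~\ref{lem:concentration-g(s)} is stated for $\Gstar\sim G_p$, while here we only need it over the randomness outside $F$ for a fixed set $T \supseteq Q$. Since $g(T)$ does not depend on the edges of $F$, the conditional and unconditional laws of $g(T)$ coincide, so the lemma applies verbatim. It gives, per set, failure probability at most $2e^{-\eps^2\opt/66}$. A union bound with $|Q|\le (\eps^2/100)\opt$ (Claim~\ref{cl:Q}) yields
\[
\Pr[\neg\mathcal{E}] \le 2\cdot 2^{\eps^2\opt/100}\, e^{-\eps^2\opt/66} \le 2e^{-c'\eps^2\opt}
\]
for a constant $c'>0$, because $\ln 2/100 < 1/66$. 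This is where the sharper $O(\eps^2)\opt$ bound on $|Q|$ is essential. Using the standing assumption $\opt\ge c\log(1/\eps)/\eps^2$ with $c$ large, we get $\Pr[\neg\mathcal{E}]\le \eps^{2}$, and indeed smaller than any fixed polynomial in $\eps$ we need.

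On $\mathcal{E}$, $\min_{Q_{VC}} g(T_{Q_{VC}}) \ge (1-\eps)\mu^\ast$. Since $g\ge 0$, this gives
\[
\E\bigl[\min_{Q_{VC}} g(T_{Q_{VC}})\bigr] \ge (1-\eps)\,\mu^\ast\,\Pr[\mathcal{E}] \ge (1-\eps)(1-\eps^2)\,\mu^\ast,
\]
and rearranging gives $\mu^\ast \le (1+O(\eps))\,\E[\min g(T_{Q_{VC}})]$, which completes the argument.

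The main obstacle is the union-bound arithmetic: the exponent $\eps^2\opt/66$ from Lemma~\ref{lem:concentration-g(s)} must dominate $|Q|\ln 2$. This is exactly why $\delta=\eps^2$ and $\gamma$ were chosen, giving $|Q|\le \eps^2\opt/100$. The remaining care is in justifying that the lemma transfers to the conditional distribution with $\Fstar$ fixed, which holds because $g(T)$ for $T\supseteq Q$ is independent of the edges in $F$.
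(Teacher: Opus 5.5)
Your proposal is correct and follows essentially the same route as the paper. It uses feasibility of each $\SEED(Q_{VC},\Fstar)$ together with optimality of $\optS$, and then a union bound over the $2^{|Q|}$ candidate sets via Lemma~\ref{lem:concentration-g(s)} and $|Q|\le \eps^2\opt/100$ to compare the minimum of expectations with the expectation of the minimum. Your remark that $g(T)$ for $T\supseteq Q$ does not depend on the edges of $F$, so that the unconditional concentration lemma transfers to the conditional law given $\Fstar$, makes explicit a step the paper leaves implicit.
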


\begin{proof}[Proof of Lemma~\ref{lem:s-is-better-than-min-seed}]
As we previously discussed, because $Q\subseteq S$, edges in $F$ never appear in $G[V\setminus S]$, therefore the objective of~\eqref{problem:best-S-with-Q} depends only on randomness outside $F$ and equals $\E_{\Gstar\setminus\Fstar}[g(S)]$.

\medskip
\noindent For any fixed $\Fstar$ and any $Q_{VC}\subseteq Q$, the set $S=\SEED(Q_{VC},\Fstar)$ is feasible for~\eqref{problem:best-S-with-Q}: (a) it contains $Q$ by definition, and (b) by Lemma~\ref{lem:v-minus-seed-is-sparse}, the graph $G[V\setminus \SEED(Q_{VC},\Fstar)]$ has $O(n/(\eps^5 p))$ edges. Hence, by the optimality of $\optS$,
\[
\E_{\Gstar\setminus\Fstar}\!\left[g\!\left(\optS\right)\right]
\;\le\; \min_{Q_{VC}\subseteq Q}\;\E_{\Gstar\setminus\Fstar}\!\left[g\!\left(\SEED(Q_{VC}, \Fstar)\right)\right].
\]

\noindent It remains to relate the expected cost of the best $\SEED$ to the expected cost of adaptively selecting the best $\SEED$ for every realization, i.e. the minimum expectation and the expected minimum. Define
\[
\mu \;:=\; \min_{Q_{VC}\subseteq Q}\;\E_{\Gstar\setminus\Fstar}\!\left[g\!\left(\SEED(Q_{VC}, \Fstar)\right)\right].
\]
By a union bound and Lemma~\ref{lem:concentration-g(s)} applied to each fixed $Q_{VC}$,
\[
\Pr\!\left[\min_{Q_{VC}\subseteq Q} g\!\left(\SEED(Q_{VC}, \Fstar)\right) < (1-\eps)\mu\right]
\;\le\; 2^{|Q|+1}\,e^{-\eps^2\;\opt/66}.
\]
Together with Claim~\ref{cl:Q} (which gives $|Q|\le \eps^2\opt/100$), we obtain
\[
\Pr\!\left[\min_{Q_{VC}\subseteq Q} g\!\left(\SEED(Q_{VC}, \Fstar)\right) < (1-\eps)\mu\right]
\;\le\;2\;\left(\frac{2}{e}\right)^{\eps^2 \opt/100}.
\]
Therefore,
\begin{align*}
\E_{\Gstar\setminus\Fstar}\!\left[\min_{Q_{VC}\subseteq Q} g\!\left(\SEED(Q_{VC}, \Fstar)\right)\right]
&\;\ge\; \textrm{Pr}\!\left[\min_{Q_{VC}\subseteq Q} g\!\left(\SEED(Q_{VC}, \Fstar)\right) \geq \,(1-\eps)\,\mu\right]\,(1-\eps)\,\mu\\
&\;\ge\; \Bigl(1-2\;\left(\tfrac{2}{e}\right)^{\eps^2 \opt/100}\Bigr)\,(1-\eps)\,\mu,
\end{align*}
which rearranges to
\[
\min_{Q_{VC}\subseteq Q}\;\E_{\Gstar\setminus\Fstar}\!\left[g\!\left(\SEED(Q_{VC}, \Fstar)\right)\right]
\;\le\; \frac{1}{\bigl(1-2\;(2/e)^{\eps^2 \opt/100}\bigr)(1-\eps)}\;
\E_{\Gstar\setminus\Fstar}\!\left[\min_{Q_{VC}\subseteq Q} g\!\left(\SEED(Q_{VC}, \Fstar)\right)\right].
\]
Using $\opt \ge C \log(1/\eps)/\eps^2$ to absorb the $2\cdot(2/e)^{\eps^2 \opt/2}$ term into $O(\eps)$ yields the claimed $(1+O(\eps))$ factor. As we have discussed, in the case of $\opt = O(\log(1/\eps)/\eps^2)$ the problem becomes trivial, in the sense that it is feasible to query the whole base graph $G$ (see Appendix~\ref{app:proofs-sec-4}).
\end{proof}

Finally, averaging the above result over the randomness of $\Fstar$ yields our final bound.

\begin{lemma}\label{lem:s-with-q-is-1-plus-epsilon}
Let $\optS$ be an optimal solution to~\eqref{problem:best-S-with-Q}. Then
\[
\E\!\left[g\!\left(\optS\right)\right]\ \le\ (1+O(\eps))\cdot \opt.
\]
\end{lemma}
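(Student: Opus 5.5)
The plan is to average Lemma~\ref{lem:s-is-better-than-min-seed} over the randomness of $\Fstar$, and then bound the adaptive minimum by the cost of the specific adaptive choice $\SEED(\Gstar)$, whose cost Lemma~\ref{lem:seed-value-is-1-plus-epsilon} controls. First, observe that $\optS$ is a fixed set independent of $\Fstar$, and since $Q\subseteq\optS$, the random variable $g(\optS)$ does not depend on the edges in $F$. Hence
\[
\E\!\left[g(\optS)\right] = \E_{\Fstar}\!\left[\E_{\Gstar\setminus\Fstar}\!\left[g(\optS)\right]\right]
\le (1+O(\eps))\,\E_{\Fstar}\!\left[\E_{\Gstar\setminus\Fstar}\!\left[\min_{Q_{VC}\subseteq Q} g\!\left(\SEED(Q_{VC},\Fstar)\right)\right]\right],
\]
where the inequality applies Lemma~\ref{lem:s-is-better-than-min-seed} pointwise in $\Fstar$. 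By the tower property, the right-hand side equals $(1+O(\eps))\,\E_{\Gstar}\bigl[\min_{Q_{VC}\subseteq Q} g(\SEED(Q_{VC},\Fstar))\bigr]$, with $\Fstar$ now the realization of $F$ inside $\Gstar$.

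Next, for each realization $\Gstar$ I take the particular choice $Q_{VC}=Q\cap\MVC{\Gstar}$. As noted before Lemma~\ref{lem:v-minus-seed-is-sparse}, this gives $\SEED(Q\cap\MVC{\Gstar},\Fstar)=\SEED(\Gstar)$. So the minimum is at most $g(\SEED(\Gstar))$. It remains to show $\E[g(\SEED(\Gstar))]\le(1+O(\eps))\opt$.

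To do this, write $\SEED=\SEED(\Gstar)$. The set $\MVC{\Gstar}\setminus\SEED$ is a vertex cover of $\Gstar[V\setminus\SEED]$, because every edge of $\Gstar$ inside $V\setminus\SEED$ is covered by $\MVC{\Gstar}$ at an endpoint outside $\SEED$. Therefore
\[
g(\SEED)\le |\SEED|+|\MVC{\Gstar}\setminus\SEED| = |\MVC{\Gstar}| + |\SEED\setminus\MVC{\Gstar}|.
\]
Taking expectations and applying Lemma~\ref{lem:seed-value-is-1-plus-epsilon} gives $\E[g(\SEED(\Gstar))]\le\opt+O(\eps)\opt$. Combining this with the previous step yields $(1+O(\eps))^2\opt=(1+O(\eps))\opt$.

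The only delicate point is the conditioning in the first step. Lemma~\ref{lem:s-is-better-than-min-seed} is stated per fixed $\Fstar$, and the expectation over $\Gstar\setminus\Fstar$ must be exactly the conditional distribution of $\Gstar$ given that $F$ realizes as $\Fstar$. This holds by independence of the edges, so the averaging is legitimate. I also need that the $O(\eps)$ in that lemma is uniform in $\Fstar$. It is, since that bound depends only on $|Q|$ and $\opt$, and uses the standing assumption $\opt\ge c\log(1/\eps)/\eps^2$.
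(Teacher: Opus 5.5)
Your proposal is correct and follows the paper's proof essentially step for step. Like the paper, you average Lemma~\ref{lem:s-is-better-than-min-seed} over $\Fstar$, bound the adaptive minimum by the choice $Q_{VC}=Q\cap\MVC{\Gstar}$ to reduce to $g(\SEED(\Gstar))$, and finish with $g(\SEED(\Gstar))\le|\MVC{\Gstar}|+|\SEED(\Gstar)\setminus\MVC{\Gstar}|$ (because $\MVC{\Gstar}\setminus\SEED(\Gstar)$ covers $\Gstar[V\setminus\SEED(\Gstar)]$) together with Lemma~\ref{lem:seed-value-is-1-plus-epsilon}; your remarks on the conditioning and on the uniformity of the $O(\eps)$ factor in $\Fstar$ only make explicit what the paper leaves implicit.
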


\begin{proof}[Proof of Lemma~\ref{lem:s-with-q-is-1-plus-epsilon}]
From the law of total expectation, with respect to the edges $F$, we have that
\[
\E\!\left[g\!\left(\optS\right)\right]
\;=\; \E_{\Fstar}\!\left[\,\E_{\Gstar\setminus\Fstar}\!\left[g\!\left(\optS\right)\right]\right].
\]
By averaging the result of Lemma~\ref{lem:s-is-better-than-min-seed} over the randomness in $\Fstar$, we get
\[
\E\!\left[g\!\left(\optS\right)\right]
\;\le\; (1+O(\eps))\cdot
\E_{\Fstar}\!\left[\E_{\Gstar\setminus\Fstar}\!\left[\min_{Q_{VC}\subseteq Q}\, g\!\left(\SEED(Q_{VC},\Fstar)\right)\right]\right].
\]

Observe that the expectation in the right hand side above, is simply taken over the all randomness in $\Gstar$.
We can upper bound this term by picking $\Fstar$ to be the realization of edges in $F$ and $Q_{VC} = Q \cap \MVC{\Gstar}$,
which yields
\begin{align*}
\E\left[\min_{Q_{VC}\subseteq Q} g\!\left(\SEED(Q_{VC}, \Fstar)\right)\right]
&\;\le\; \E\left[g\!\left(\SEED(\Gstar)\right)\right]\\
&\;=\; \E\left[\bigl|\,\SEED(\Gstar) \cup \MVC{\Gstar[V\setminus \SEED(\Gstar)]}\,\bigr|\right]\\
&\;\le\; \E\left[\bigl|\,\SEED(\Gstar) \cup \MVC{\Gstar}\,\bigr|\right].
\end{align*}

The proof is concluded by employing Lemma~\ref{lem:seed-value-is-1-plus-epsilon}, 
\begin{align*}
\E\!\left[\,\bigl|\,\SEED(\Gstar) \cup \MVC{\Gstar}\,\bigr|\,\right]
&\;\le\; \E\!\left[\,\bigl|\MVC{\Gstar}\bigr|\,\right] \;+\; \E\!\left[\,\bigl|\,\SEED(\Gstar)\setminus \MVC{\Gstar}\,\bigr|\,\right]\\
&\;\le\; (1+O(\eps))\cdot\E\!\left[\,\bigl|\MVC{\Gstar}\bigr|\,\right].\qedhere
\end{align*}
\end{proof}

\noindent Finally, Theorem~\ref{thm:ouralg-1-plus-epsilon} follows from Lemma~\ref{lem:s-with-q-is-1-plus-epsilon} and the fact that Problem~\eqref{problem:best-S-with-Q} is a more constrained version of Problem~\eqref{problem:best-S}.

\section{Concentration Bound for Minimum Vertex Cover Size}\label{sec:mvc-concentration}In this section we prove that the size of the minimum vertex cover is concentrated around its expectation.  Rather well-known  edge-exposure and vertex-exposure Doob martingale arguments yield concentration bounds in terms of the number of edges and vertices in the graph, respectively.   We strengthen this by giving a  novel concentration bound in  $\opt$, where $\opt = \mathbb{E}[\abs{\MVC{\Gstar}}]$, yielding a multiplicative $(1\pm\varepsilon)$-type concentration guarantee.

We begin by defining a martingale, and stating a special case of Freedman's general inequality for martingales, which is an analogue of Bernstein's inequality for martingales. While Bernstein's inequality applies only to independent random variables having bounded variance, Freedman's inequality yields a similar bound for a sequence of random variables that is a martingale, and not necessarily independent.

\begin{definition}[Martingale]
    \label{def:martingale}
    A sequence of random variables $Y_0,\dots,Y_N$ is a \textit{martingale} with respect to a sequence of random variables $X_1,\dots,X_N$ if the following conditions hold:
    \begin{enumerate}
        \item $Y_k$ is a function of $X_1,\dots,X_k$ for all $k \ge 1$.
        \item $\E[|Y_k|] < \infty$ for all $k \ge 0$.
        \item $\E[Y_k|X_1,\dots,X_{k-1}]=Y_{k-1}$ for all $k \ge 1$.
    \end{enumerate}
\end{definition}

The following is a special case of Freedman's inequality for martingales:
\begin{theorem}[Freedman's Inequality]
    \label{thm:freedman}
    Let $Y_0,\dots,Y_N$ be a martingale with respect to the sequence $X_1,\dots,X_N$. Suppose the differences $D_k = Y_k - Y_{k-1}$ satisfy $\abs{D_k} \leq R$ for any $k \ge 1$ always, and suppose $W = \sum_{k=1}^N \Var(D_k | X_1,\dots,X_{k-1}) \leq \sigma^2$ always. Then, for any $t \ge 0$,
    \[ \Pr[\abs{Y_N - Y_0} \geq t] \leq 2 \exp\paren{-\frac{t^2}{2(\sigma^2 + Rt/3)}}. \]
\end{theorem}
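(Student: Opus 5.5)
The statement is a special case of Freedman's inequality. I would prove it with the standard exponential-supermartingale (Chernoff) argument, in the form that gives Bennett's inequality, and then weaken Bennett's bound to the Bernstein form. Write $\mathcal{F}_{k}$ for the information in $X_1,\dots,X_{k}$, write $V_k = \Var(D_k\mid \mathcal{F}_{k-1})$, and set $W_k=\sum_{j\le k}V_j$, so $W_N = W\le \sigma^2$. By the martingale property, $\E[D_k\mid\mathcal{F}_{k-1}]=0$. It suffices to prove the one-sided bound $\Pr[Y_N-Y_0\ge t]\le \exp\paren{-\frac{t^2}{2(\sigma^2+Rt/3)}}$. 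The lower tail follows by applying the same argument to the martingale $-Y_k$, whose differences satisfy the same hypotheses, and a union bound over the two tails gives the factor $2$.

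\textbf{Step 1 (one-step mgf bound).} Fix $\lambda>0$ and put $\phi(\lambda)=(e^{\lambda R}-1-\lambda R)/R^2$. The function $x\mapsto (e^{x}-1-x)/x^2$ is nondecreasing on $\mathbb{R}$. Hence for $|x|\le R$ we get $e^{\lambda x}\le 1+\lambda x+\phi(\lambda)x^2$. Apply this with $x=D_k$ and take conditional expectations, using $\E[D_k\mid\mathcal{F}_{k-1}]=0$. This gives $\E[e^{\lambda D_k}\mid\mathcal{F}_{k-1}]\le 1+\phi(\lambda)V_k\le \exp(\phi(\lambda)V_k)$.

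\textbf{Step 2 (supermartingale).} Define $Z_k=\exp\paren{\lambda(Y_k-Y_0)-\phi(\lambda)W_k}$. The quantity $W_k$ is determined by $\mathcal{F}_{k-1}$ (it is predictable). Step 1 therefore gives $\E[Z_k\mid\mathcal{F}_{k-1}]\le Z_{k-1}$, so $\E[Z_N]\le Z_0=1$. Since $W_N\le\sigma^2$ holds always, we obtain $\E[e^{\lambda(Y_N-Y_0)}]\le e^{\phi(\lambda)\sigma^2}$. The "always" hypothesis is exactly what lets us pull the variance term out of the expectation here. Without it one would need a stopping-time argument on the event $\{W\le\sigma^2\}$, which is the usual source of subtlety in Freedman's inequality.

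\textbf{Step 3 (optimize and simplify).} Markov's inequality gives $\Pr[Y_N-Y_0\ge t]\le \exp(-\lambda t+\phi(\lambda)\sigma^2)$. Choose $\lambda=\frac1R\log(1+Rt/\sigma^2)$. This yields the Bennett bound $\exp\paren{-\frac{\sigma^2}{R^2}h(Rt/\sigma^2)}$ with $h(u)=(1+u)\log(1+u)-u$. The remaining step is the elementary inequality $h(u)\ge \frac{u^2}{2(1+u/3)}$ for $u\ge0$. To prove it, note both sides vanish together with their first derivatives at $u=0$, and compare second derivatives: $\frac{1}{1+u}\ge \frac{1}{(1+u/3)^3}$, which holds since $(1+u/3)^3\ge 1+u$. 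Substituting $u=Rt/\sigma^2$ gives exactly $\frac{t^2}{2(\sigma^2+Rt/3)}$.

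The only genuinely delicate points are the monotonicity fact in Step 1 and the calculus inequality in Step 3. I expect the latter to be the main obstacle, but it is routine.
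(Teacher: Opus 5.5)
The paper gives no proof of this statement. It quotes it as a known special case of Freedman's martingale inequality and uses it as a black box in the proof of Theorem~\ref{thm:MVC-concentration}. Your proof is correct, and it is the standard exponential-supermartingale argument behind Freedman's inequality, specialized to the case where the predictable variance $W$ is bounded surely.

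As you note, that hypothesis is used only to replace $e^{\phi(\lambda)W_N}$ by $e^{\phi(\lambda)\sigma^2}$, so no stopping-time argument is needed. Compared with the citation, your route makes the result self-contained. It also yields a slightly stronger intermediate bound, the Bennett-type tail $\exp\bigl(-\frac{\sigma^2}{R^2}h(Rt/\sigma^2)\bigr)$ on each side.

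Both elementary facts you flag as delicate are easy:
\begin{itemize}
\item The monotonicity in Step 1 follows at once from $\frac{e^x-1-x}{x^2}=\int_0^1(1-s)e^{sx}\,ds$.
\item Your second-derivative comparison via $(1+u/3)^3\ge 1+u$ correctly settles Step 3.
\end{itemize}

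The only thing left implicit is the degenerate cases. Your choice $\lambda=\frac1R\log(1+Rt/\sigma^2)$ needs $R>0$ and $\sigma^2>0$.
\begin{itemize}
\item If $\sigma^2=0$, each $D_k$ has zero conditional mean and zero conditional variance, so $Y_N=Y_0$ almost surely. Alternatively, apply your bound with $\sigma^2+\eta$ and let $\eta\downarrow 0$.
\item If $R=0$, the statement is trivial.
\end{itemize}
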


We are restate our concentration bound from the introduction:

\concentration*

We note that we set the value of $C$ to be the constant from Lemma \ref{lem:structural}.

\eat{
\begin{theorem}\label{thm:MVC-concentration}
    Let $Z = |\MVC{\Gstar}|$ and $\opt = \mathbb{E}_{\Gstar\sim G_p}[\abs{\MVC{\Gstar}}]$. %
    Then for any $t \geq 0$,
    \begin{equation} \label{eq:concentration_bernstein}
        \Pr[\abs{Z - \opt} \ge t] \leq 2 \exp\paren{-\frac{t^2}{4C\cdot\opt + 2t/3}},
    \end{equation}
    where $C < 8$ is the constant from Lemma \ref{lem:structural}.
\end{theorem}
}%

\begin{proof}[Proof of \Cref{thm:MVC-concentration}]
To prove this result, we will require the existence of a set $S \subseteq V$ of vertices, which satisfies that $|S|=O(\opt)$, and furthermore, the induced subgraph $G[V \setminus S]$ has at most $O(\opt/p)$ edges. We defer the construction of such a set satisfying these properties to Lemma \ref{lem:structural} in the next section; here, we assume its existence and proceed. %
Let $U = V\setminus S$ and $E_U = E(G[U])$. Let $m = |E_U|$ and $k= |S|$. Fix an arbitrary ordering $E_U=\{e_1,\dots,e_m\}$ of all edges in $E_U$, as well as an ordering $S=\{v_1,\dots,v_k\}$ of the vertices in $S$. 

We will expose the randomness of the graph realization $G^\star$ in two phases via a sequence of random variables $X_1,\dots,X_m$, followed by $X_{m+1},\dots,X_{m+k}$. %
In the first $m$ steps, we reveal, in order, the existence of every edge in $E_U$ in $\Gstar$. In the next $k$ steps, we reveal, in order, the edges adjacent to each vertex in $S$.

\begin{enumerate}
  \item \textbf{Phase 1 (edge exposure inside $U$).} For each $i=1,\dots,m$, reveal the
  indicator $X_{i}=\mathbf{1}\{e_i\in E(\Gstar)\}$. That is, $X_i$ describes if edge $e_i \in E_U$ is realized in $\Gstar$. %

  \item \textbf{Phase 2 (vertex exposure on $S$).}
  For $j=1,\dots,k$, let $X_{m+j} = (\mathbf{1}\{(v_j,v) \in E(\Gstar)\})_{v \in V \setminus \{v_j\}}$. That is, $X_{m+j}$ describes the neighborhood of vertex $v_j \in S$ in $\Gstar$.
\end{enumerate}

Let $N=m+k$. We will define the Doob martingale $Y_0,Y_1,\dots,Y_N$ with respect to the sequence of random variables $X_1,\dots,X_N$ described above as $Y_i=\E[Z | X_1,\dots,X_{i-1}]$ for $i \ge 0$. Note that $Y_0=\opt$ and $Y_N=\abs{\MVC{\Gstar}}$. We observe how this martingale is a hybridization of the standard edge exposure and vertex exposure martingales, where the sequence $X_1,\dots,X_N$ comprises entirely of either Phase 1 or Phase 2 variables from above.

Denote the martingale differences by $D_i = Y_i - Y_{i-1}$, for $i=1,\dots,N$. Our goal will be show that the conditions of Freedman's inequality (Theorem \ref{thm:freedman}) are satisfied.

\medskip
First, we will show that the martingale differences are bounded by $R=1$ always, i.e., $|D_i| \le 1$ for all $i$.
\begin{itemize}
  \item \textbf{Phase 1 ($1 \le i \le m$).}
  At step $i$ in Phase 1, we reveal the status of edge $e_i \in E_U$. Let $x_1,\dots,x_i$ denote any assignment of the random variables $X_1,\dots,X_i$. We will denote $\E[Z \mid X_1=x_1,\dots,X_i=x_i]$ simply as $\E[Z \mid x_1,\dots,x_i]$ for convenience. Then, observe that
\begin{align*}
    |D_i| &= |Y_i - Y_{i-1}| = \left|\E\left[Z | x_1,\dots,x_i\right]-\E\left[Z | x_1,\dots,x_{i-1}\right]\right| \\
    &= \left|\sum_{\tilde{x}_{i}}\Pr[X_i=\tilde{x}_i \mid  X_1=x_1,\dots,X_{i-1}=x_{i-1}] \cdot \left(\E\left[Z|x_1,\dots,x_i\right]-\E\left[Z | x_1,\dots,\tilde{x}_i\right] \right)\right| \\
    &\le \sum_{\tilde{x}_{i}}\Pr[X_i=\tilde{x}_i \mid X_1=x_1,\dots,X_{i-1}=x_{i-1}] \cdot \left| \left(\E\left[Z | x_1,\dots,x_i\right]-\E\left[Z | x_1,\dots,\tilde{x}_i\right] \right)\right|
    \end{align*}
  Now, let $R_{\rest}$ denote all of the remaining randomness in $\Gstar$ beyond $X_1,\dots,X_i$; since $R_{\rest}$ is independent of $X_1,\dots,X_i$, we have that
  \begin{align*}
    &\left|\E\left[Z | x_1,\dots,x_i\right]-\E\left[Z | x_1,\dots,\tilde{x}_i\right]\right| \\
    &= \left|\sum_{r_{\rest}}\Pr[R_{\rest}=r_{\rest}]\cdot \left(\E[Z | x_1,\dots,x_i,r_{\rest}]-\E[Z \mid x_1,\dots,\tilde{x}_i,r_{\rest}]\right)\right|.
  \end{align*}
  But now, note that the graph $\Gstar$ is completely determined by $x_1,\dots,x_{i},r_\rest$ (respectively $x_1,\dots,\tilde{x}_{i},r_\rest$); furthermore, these graphs differ only in the realization of the edge $X_i$; in particular, the size of the minimum vertex covers of these two graphs can differ by at most 1. Substituting this above, we get that $|D_i| \le 1$.

  \item \textbf{Phase 2 ($m+1 \le i \le N$).}
  Let $j=i-m$. At Step $i$ in Phase 2, we reveal all edges incident to vertex $v_j \in S$, namely $X_i = (\textbf{1}\{(v_j,v) \in E(\Gstar)\})_{v \in V \setminus \{v_j\}}$. Similar to the calculation above, where $x_i$ now denotes an assignment to all the indicator random variables involved in $X_i$, we have that
    \begin{align*}
    |D_i| \le \sum_{\tilde{x}_{i}}\Pr[X_i=\tilde{x}_i \mid X_1=x_1,\dots,X_{i-1}=x_{i-1}] \cdot \left| \left(\E\left[Z|x_1,\dots,x_i\right]-\E\left[Z|x_1,\dots,\tilde{x}_i\right] \right)\right|
    \end{align*}
  Now, let $R_{\rest}$ denote all of the remaining randomness in $\Gstar$ beyond $X_1,\dots,X_i$; since $R_{\rest}$ is independent of $X_1,\dots,X_i$, we have that
  \begin{align*}
&\left|\E\left[Z|x_1,\dots,x_i\right]-\E\left[Z|x_1,\dots,\tilde{x}_i\right]\right| \\
    &= \left|\sum_{r_{\rest}}\Pr[R_{\rest}=r_{\rest}]\cdot \left(\E[Z|x_1,\dots,x_i,r_{\rest}]-\E[Z|x_1,\dots,\tilde{x}_i,r_{\rest}]\right)\right|.
  \end{align*}
  But now, note that the graph $\Gstar$ is completely determined by $x_1,\dots,x_{i},r_\rest$ (respectively $x_1,\dots,\tilde{x}_{i},r_\rest$); furthermore, these graphs differ only in the realization of the neighborhood of the single vertex $v_j=v_{i-m}$; in particular, the size of the minimum vertex covers of these two graphs can differ by at most 1 (either $v_j$ is included in one or not). Substituting this above, we get that $|D_i| \le 1$.
  
\end{itemize}

Summarily, we conclude that $|D_i| \le R = 1$ for every $i \ge 1$ always. 

Next, we bound the variance $W = \sum_{i=1}^N \Var(D_i | X_1,\dots,X_{i-1})$. We will split it into $W_1$ (Phase 1) and $W_2$ (Phase 2).

\begin{itemize}
    \item{\bf Bounding $W_1 = \sum_{i=1}^m \Var(D_i | X_1,\dots,X_{i-1})$.} Consider any fixing of the random variables $X_1=x_1,\dots,X_{i-1}=x_{i-1}$, and recall that 
    \begin{align*}
    D_i = Y_i - Y_{i-1} = \E\left[Z|x_1,\dots,x_{i-1},X_i\right]-\E\left[Z|x_1,\dots,x_{i-1}\right] 
    \end{align*}
    Note that $X_i$ is independent of $X_1,\dots,X_{i-1}$, and is in fact a Bernoulli random variable with parameter $p$. In particular, conditioned on $x_1,\dots,x_{i-1}$, the distribution of $D_i$ is as follows:
    \begin{align*}
        D_i = \begin{cases}
            \E\left[Z|x_1,\dots,x_{i-1},X_i=1\right]-\E\left[Z|x_1,\dots,x_{i-1}\right] := a & \text{with probability $p$} \\
            \E\left[Z|x_1,\dots,x_{i-1},X_i=0\right]-\E\left[Z|x_1,\dots,x_{i-1}\right] := b & \text{with probability $1-p$}.
        \end{cases}
    \end{align*}
    Therefore, we have that
    \begin{align*}
        \Var(D_i | x_1,\dots,x_{i-1}) &= pa^2+(1-p)b^2-(pa+(1-p)b)^2 %
        = p(1-p)(a-b)^2 \\
        &= p(1-p)|\E\left[Z|x_1,\dots,x_{i-1},X_i=1\right]-\E\left[Z|x_1,\dots,x_{i-1},X_i=0\right]|^2.
    \end{align*}
    By similar reasoning as earlier in the proof, where we realize the rest of the randomness, we have that 
    $$|\E\left[Z|x_1,\dots,x_{i-1},X_i=1\right]-\E\left[Z|x_1,\dots,x_{i-1},X_i=0\right]|^2 \le 1.$$
    So, we get finally that $W_1 = \sum_{i=1}^m \Var(D_i | x_1,\dots,x_{i-1}) \le mp(1-p) =p(1-p)|E_U|$. By Lemma \ref{lem:structural}, $|E_U| \leq C \cdot \opt/p$. So, $W_1 \le (1-p) C \cdot \opt \leq C \cdot \opt$.

    \item{\bf Bounding $W_2 = \sum_{i=m+1}^{N} \Var(D_i | X_1,\dots,X_{i-1})$.} In this case, note that by our previous calculations, we know that $-1 \le D_i \le 1$ always, which means that $\Var(D_i | X_1,\dots,X_{i-1}) \le 1$. This implies that $\sum_{i=m+1}^{N} \Var(D_i | X_1,\dots,X_{i-1}) \le k \le C \cdot \opt$, where we used the guarantee of Lemma \ref{lem:structural} again.
    
\end{itemize}
In total, $W = W_1 + W_2 \le 2 C\cdot\opt$ always.
Finally, we apply Freedman's inequality with $R=1$ and $\sigma^2 = 2C\cdot\opt$, to get
\begin{align*}
    \Pr[|Z - \opt| \ge t] \le 2\exp\left(-\frac{t^2}{4C \cdot \opt + 2t/3}\right),
\end{align*}
which completes the proof.
\end{proof}
\begin{corollary}\label{cor:mvc-concentration}
Let $Z = |\MVC{\Gstar}|$ and $\opt = \mathbb{E}[\abs{\MVC{\Gstar}}]$. 
For any $t\le \opt$
    \begin{equation} \label{eq:concentration_gaussian}
        \Pr[|Z - \opt| > t] \leq 2 e^{- (t^2/33)/\opt}.
    \end{equation}
\end{corollary}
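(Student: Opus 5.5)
This follows directly from Theorem~\ref{thm:MVC-concentration} by substituting $t \le \opt$ into the denominator. The only real task is to keep track of constants, including the fact that $C < 8$.

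Fix $t \le \opt$. Theorem~\ref{thm:MVC-concentration} gives
\[
\Pr[|Z-\opt| \ge t] \le 2\exp\paren{-\frac{t^2}{4C\cdot\opt + 2t/3}}.
\]
Since $\{|Z-\opt|>t\}\subseteq\{|Z-\opt|\ge t\}$, this bound also controls the strict event in the corollary.

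Next, bound the denominator using $t\le\opt$, so that $2t/3 \le 2\opt/3$. Together with $C<8$ this gives
\[
4C\cdot\opt + 2t/3 \le \paren{32 + 2/3}\opt < 33\,\opt.
\]
The exponent $-t^2/(4C\opt+2t/3)$ is therefore at most $-t^2/(33\,\opt)$, which is exactly \eqref{eq:concentration_gaussian}.

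Two edge cases need a brief check. If $t=0$, the right-hand side is $2$ and the bound is trivial. If $\opt=0$, then $Z=0$ almost surely, since $Z\ge 0$ and $\E[Z]=0$; the left-hand side is $0$, and we read the right-hand side as $2$ when $t=0=\opt$.

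The one point that needs care is that the constant $33$ depends on the strict inequality $C<8$, which the paper takes from Lemma~\ref{lem:structural}. With $C=8$ the denominator would only be bounded by $(32+2/3)\opt<33\,\opt$, which still works. The margin of $1/3$ is enough, so there is no real obstacle.
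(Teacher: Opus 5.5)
Your proposal is correct and follows the paper's argument: apply Theorem~\ref{thm:MVC-concentration}, use $t\le\opt$ and $C<8$ to bound the denominator by $(4C+2/3)\,\opt\le 33\,\opt$, and note that the event $|Z-\opt|>t$ is contained in $|Z-\opt|\ge t$. Your remarks on the degenerate cases $t=0$ and $\opt=0$ are sound and are slightly more careful than the paper.
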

\begin{proof}
Apply Theorem~\ref{thm:MVC-concentration}. If $t \leq \opt$, then the denominator is upper bounded by $4C\cdot\opt + 2\opt/3 = (4C+2/3)\opt$. This finally gives
    \[ \Pr[\abs{Z - \opt} \ge t] \leq 2 \exp\paren{-\frac{t^2}{(4C+2/3)\cdot\opt}} %
    \]
    This is of the form $C' e^{-c t^2/\opt}$ with $C'=2$ and $c = 1/(4C+2/3) \geq 1/33$.
\end{proof}

\subsection{A Structural Lemma based on Expected Minimum Vertex Cover}
Next, we establish a structural lemma on minimum vertex cover, which is a crucial component in the concentration result of Theorem~\ref{thm:MVC-concentration}.
\begin{lemma}
    \label{lem:structural}
    Let $G=(V,E)$ be a graph, and $p \in (0, 1]$. Let $\opt = \E[\abs{\MVC{\Gstar}}]$. There exists a subset of vertices $S \subseteq V$ such that:
    \begin{enumerate}
        \item $\abs{S} \leq c \cdot \opt$.
        \item The induced subgraph $G[V\setminus S]$ has at most $C \frac{\opt}{p}$ edges.
    \end{enumerate}
    The constants can be chosen as $c = C = 2\paren{\frac{e}{e-1} + 2} < 8$.
\end{lemma}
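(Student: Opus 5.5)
The plan is to build $S$ from a single ``hallucinated'' realization and then argue by an averaging (probabilistic-method) argument that some choice works. I would first draw $G_1\sim G_p$ and put $\MVC{G_1}$ into $S$; this part has expected size exactly $\opt$. The remaining graph lives on the random independent set $I=V\setminus \MVC{G_1}$. Every edge of $G[I]$ is unrealized in $G_1$, since $\MVC{G_1}$ covers all realized edges.

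To control the number of base-graph edges inside $I$, I would peel off high-degree vertices in rounds. A vertex of $I$ whose degree inside the current residual graph $G[I]$ is at least $1/p$ gets added to $S$, and this is repeated with the threshold applied to the updated residual graph. Every vertex surviving all rounds then has fewer than $1/p$ residual neighbours. The proof needs two bounds.
\begin{enumerate}
\item The expected number of peeled vertices is $O(\opt)$. This should produce the $\frac{e}{e-1}$ term in the constant.
\item The number of surviving non-isolated vertices is $O(\opt)$. This gives an edge count of $O(\opt/p)$ and should produce the additive ``$+2$'' term.
\end{enumerate}
The factor $2$ in $c=C=2(\frac{e}{e-1}+2)$ would come from Markov's inequality: we only need one realization $G_1$ for which both $|S|$ and the residual edge count are at most twice their expectations. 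For this I would use the usual trick of adding the two normalized quantities $|S|/\opt$ and $p\,|E(G[V\setminus S])|/\opt$ and picking a $G_1$ where their sum is at most its expectation.

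For bound 1, I would charge peeled vertices to $\opt$ with a resampling argument. Fix a vertex $v$ and the event that $v\in I$ while having $d\ge 1/p$ residual neighbours in $I$, all joined to $v$ by unrealized edges. This event has probability $(1-p)^d\le e^{-pd}\le e^{-1}$ relative to the ``same graph but with these $d$ edges resampled''. In a resampled graph where some edge from $v$ into $I$ appears, the optimal cover must spend an extra vertex. The intended consequence is that the expected number of peeled vertices is at most $\sum_{j\ge1}e^{-j}\cdot(\text{charge to }\opt)$ per round. This geometric series is how I expect to obtain $\frac{1}{1-e^{-1}}=\frac{e}{e-1}$.

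Bound 2 is where I expect the main obstacle. Vertices that survive with small positive residual degree could in principle be numerous (up to $n$), yet we need $O(\opt)$ of them. I would try to charge each surviving edge $(u,v)$ to the event that it appears in a fresh realization $\Gstar$: each such edge is realized with probability $p$. If many low-degree surviving edges were realized, they would contain a matching of size proportional to their number times $p$, and $|\MVC{\Gstar}|$ is at least the matching number. This would give $p\cdot\#\{\text{surviving edges}\}\le 2\opt$ in expectation, since the surviving graph has max degree below $1/p$. A greedy matching in the realized subgraph then keeps a constant fraction. Making this charging rigorous is the crux; correlations are avoided by using the independent fresh realization $\Gstar$ rather than $G_1$.
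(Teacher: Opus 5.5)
\paragraph{The gap is your bound 1.} Bound 1 carries the entire content of the lemma. You need $\E\bigl[\#\{v\in I: |N_G(v)\cap I|\ge 1/p\}\bigr]=O(\opt)$ for $I=V\setminus\MVC{G_1}$, and the resampling argument you sketch does not deliver it, for three reasons.

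First, the set of $d$ residual neighbours of $v$ is a function of $G_1$. It depends, through $\MVC{G_1}$, on the very edges $(v,w)$ you want to resample. So you cannot condition on that set and then treat those edges as fresh $p$-coins.

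Second, even granting that, the step ``the optimal cover must spend an extra vertex'' is false. Adding an edge with both endpoints outside a minimum vertex cover need not increase the minimum size. Take $G_1$ to be an edge $ab$ plus an isolated vertex $c$, with $\MVC{G_1}=\{a\}$ and $I=\{b,c\}$. Adding $bc$ gives a path whose minimum cover is $\{b\}$, of the same size. Since $\MVC{\cdot}$ is an arbitrary fixed map, there is no pivotality to charge to $\opt$.

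Third, peeling high-degree vertices does not cascade. Deleting vertices only lowers residual degrees, so after one round nothing new crosses $1/p$. The geometric series that was supposed to produce $\frac{e}{e-1}$ therefore has no basis.

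A direct union bound over the possible sets $I$ gives only $O(\opt\log(n/\opt))$: if $h$ vertices are counted, $G[I]$ has at least $h/(2p)$ edges, all unrealized. Removing the logarithm seems to require already having $O(\opt)$ hub vertices whose deletion leaves all degrees below $1/(2p)$. That is the lemma itself, so your plan reduces the statement to something equally hard.

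\paragraph{Bound 2 is not the obstacle, and what the paper does instead.} In the form you state it (few non-isolated survivors), bound 2 is false. For $G$ a perfect matching, $\opt=pn/2$, while about $(1-p)n$ vertices survive with residual degree $1$. The edge version, however, is easy and even deterministic. For any $U$ with $\Delta(G[U])<1/p$, each edge of $G[U]$ is realized in $\Gstar$ with no other realized $G[U]$-edge at its endpoints with probability at least $p/16$. These edges form a matching, so $e(G[U])\le 16\,\opt/p$, and no Markov or summing trick is needed.

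The missing idea for the hard part is to charge to a matching rather than to an arbitrary minimum vertex cover, which is what the paper does. It orders $V$ greedily so that each next vertex $u$ maximizes the probability $p(u)$ of being matched backward by a random greedy matching $M$, where $\E|M|\le\opt$. Hence every successor of $u$ is matched backward with probability at most $p(u)$. Independence of $u$'s forward edges then gives $P(u)\ge\max\bigl(p(u),(1-2p(u))(1-e^{-p\delta^+(u)})\bigr)$. Summing, $\sum_u P(u)=2\E|M|\le 2\opt$ bounds both $|\{u:p\delta^+(u)\ge1\}|$ and $\sum_{u\notin S}\delta^+(u)$.

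This is also where the stated constants come from. The term $1/C_0=\frac{e}{e-1}+2$ comes from balancing the two lower bounds with $K=1-e^{-1}$, and the factor $2$ comes from $\E|M|=\frac12\sum_u P(u)$, not from Markov or a geometric series.
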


\begin{proof}
    We will first construct a specific ordering $\pi$ of the vertices in $G$, by analyzing the following 
    greedy process, which incrementally constructs $\pi$.
   
    \medskip 
    \noindent Initialize $P = ()$.
    While $V \setminus P \neq \emptyset$:
    \begin{enumerate}
        \item Suppose $P=(v_1,\dots,v_{|P|})$ so far. Draw a random realization $G^\star \sim G_p$ %
        \item For $i=1,2,\dots,|P|$:
        \begin{itemize}[label=--]
            \item If $v_i$ is unmatched, match $v_i$ uniformly at random to one of the yet-unmatched neighbors of $v_i$ within the set $V \setminus \{v_1,\dots,v_{i-1}\}$ in the realization $G^\star$. 
        \end{itemize}
        \item The above process induces a probability distribution for every vertex in $V \setminus P$ to be matched to a vertex in $P$, over the randomness in the draw of $G^\star$ and the matching---so, for every vertex $v \in V \setminus P$, denote by $p(v|P)$ the probability that it gets matched to a vertex in $P$.
        \item Select $u = \arg\max_{v \in V \setminus P} p(v|P)$.
        \item Append $u$ to $P$.
    \end{enumerate}

    Let $\pi$ be the resulting permutation of the vertices in $G$ at the end of the procedure above. For $\Gstar \sim G_p$, running only the for-loop in Step 2 above with $P=\pi$ constitutes a greedy procedure for constructing a matching $M$ of $G^\star$. We will analyze the matching $M$ that results by this process in the following. Note that $\E[\abs{M}] \leq \E[\abs{\MM{\Gstar}}] \leq \opt$, where $\MM{\Gstar}$ is a \textit{maximum} matching of $\Gstar$, whose size is at most $\abs{\MVC{\Gstar}}$.

    \smallskip
    For any vertex $u$, let $P_u$ be its predecessors in $\pi$ and let $S_u$ be its successors.
    \begin{itemize}
        \item $\delta^+(u)$: The number of neighbors of $u$ in $G$ that are in $S_u$ (forward degree).
        \item $p(u)$: The probability that $u$ is matched by a vertex in $P_u$ (matched backwards).
        \item $P(u)$: The total probability that $u$ is matched in $M$.
        \item $R(u) = 1 - e^{-p\delta^+(u)}$.
    \end{itemize}

    A crucial consequence of the greedy construction of $\pi$ is the following property: For any successor $v \in S_u$, the probability that $v$ is matched by $P_u$ is at most $p(u)$. This is because when $u$ was selected to be added to $\pi$, it had the maximum probability of being matched by $P_u$ among all remaining vertices, including $v$.
    We aim to prove the following key inequality:
    \begin{equation} \label{eq:key_inequality}
        P(u) \geq \max\paren{p(u), (1-2p(u)) R(u)}.
    \end{equation}

    Trivially, $P(u) \geq p(u)$, since $p(u)$ accounts for only part of the total probability $P(u)$. We now focus on the second term. If $p(u) \geq 1/2$, the term is non-positive and the inequality holds trivially. So, assume $p(u) < 1/2$.
    
    Let $A$ be the event that $u$ is matched backwards by $P_u$, so that $\Pr[A]=p(u)$.
    For $v \in S_u$, let $C_v$ be the event that $v$ is matched backwards to a vertex in $P_u$. As argued above, by the construction of $\pi$, $\Pr[C_v] \leq p(u)$.
    We now analyze the probability of the event $F$ that $u$ matches forward. Again, we have that $P(u) \geq \Pr[F]$. Observe that $u$ matches forward, only if $u$ is unmatched backwards ($A^c$), and there exists a successor $v$ such that the edge $(u,v)$ gets realized in $\Gstar$ AND $v$ is unmatched when $u$ is processed in the greedy procedure according to the permutation $\pi$ for constructing the matching $M$ of $\Gstar$ (i.e., the event $C_v^c$). %

    Now, to further analyze $\Pr[F]$, let us fix an arbitrary ordering on the successors $S_u = \{v_1, v_2, \dots\}$. Let $B_i$ be the event that $(u, v_i)$ is the first realized edge from $u$ to $S_u$ in this ordering. The events $B_i$ are disjoint. Let $B = \cup_i B_i$ be the event that at least one edge from $u$ to $S_u$ is realized. We have that $\Pr[B] = 1-(1-p)^{\delta^+(u)}$. Since $1-p \leq e^{-p}$, we have $\Pr[B] \geq 1-e^{-p\delta^+(u)} = R(u)$.

    Let $F_i$ be the event that $A^c$, $B_i$, and $C_{v_i}^c$ all occur. If any $F_i$ occurs, then there exists a realized edge $(u,v_i)$ where both $u$ and $v_i$ are available when $u$ is processed. Thus, the greedy matching procedure will match $u$ forward.
    Since the $B_i$'s are disjoint, the $F_i$'s are disjoint, which implies that
    \[ \Pr[F] \geq \Pr[\cup_i F_i] = \sum_i \Pr[A^c \cap B_i \cap C_{v_i}^c]. \]

    We can now use independence. The event $B_i$ depends only on the realization of edges between $u$ and $S_u$. The events $A$ and $C_{v_i}$ depend only on the realization of edges incident to $P_u$. Since these sets of edges are disjoint, $B_i$ is independent of the joint event $(A^c \cap C_{v_i}^c)$, giving us that
    \[ \Pr[F] \geq \sum_i \Pr[B_i] \Pr[A^c \cap C_{v_i}^c]. \]
    Next, we use the union bound to lower bound the probability that both $u$ and $v_i$ are available.
    \begin{align*}
        \Pr[A^c \cap C_{v_i}^c] = 1 - \Pr[A \cup C_{v_i}] \geq 1 - (\Pr[A] + \Pr[C_{v_i}]).
    \end{align*}
    Using $\Pr[A]=p(u)$ and the crucial property that $\Pr[C_{v_i}] \leq p(u)$, we get that
    \[ \PP[A^c \cap C_{v_i}^c] \geq 1 - 2p(u). \]
    Substituting this back into the expression for $\Pr[F]$ finally gives that
    \begin{align*}
        \Pr[F] \geq \sum_i \Pr[B_i] (1-2p(u)) = (1-2p(u)) \sum_i \Pr[B_i] = (1-2p(u)) \Pr[B].
    \end{align*}
    Since $\Pr[B] \geq R(u)$, we conclude $P(u) \geq \Pr[F] \geq (1-2p(u))R(u)$. This establishes the desired inequality (\ref{eq:key_inequality}). We now proceed towards constructing the set $S$ from the lemma statement.

    \paragraph{Construction of the Set $S$.}
    Define the set $S = \{u \in V \mid p\delta^+(u) \geq 1\}$.
    Let $K = 1-e^{-1}$. %
    For $u \in S$, $p\delta^+(u) \geq 1$, so $R(u) \geq 1-e^{-1} = K$.
    Applying this to Inequality (\ref{eq:key_inequality}):
    \[ P(u) \geq \max(p(u), (1-2p(u))K). \]
    This expression is minimized when the two terms are equal: $p(u) = (1-2p(u))K$. This yields $p(u) = K/(1+2K)$.
    Let $C_0 = \frac{K}{1+2K}$. %
    Thus, for all $u \in S$, $P(u) \geq C_0$.
    Next, we bound $\abs{S}$ using the expected size of the matching $M$:
    \[ \E[\abs{M}] = \frac{1}{2} \sum_{u \in V} P(u). \]
    Since $\E[\abs{M}] \leq \opt$, we have $\sum_{u \in V} P(u) \leq 2\cdot \opt$. So,
    \[ 2\cdot \opt \geq \sum_{u \in S} P(u) \geq C_0 \abs{S}. \]
    Therefore, $\abs{S} \leq \frac{2}{C_0} \opt$. This proves the first part of the lemma with $c = \frac{2}{C_0}$.
    Note that since $\frac{1}{C_0} = \frac{1+2K}{K} = \frac{e}{e-1} + 2$, we have $c = 2\paren{\frac{e}{e-1} + 2}$.

    \paragraph{Bounding Edges in $G[V\setminus S]$.}
    Let $U = V\setminus S$. For $u \in U$, we have $p\delta^+(u) < 1$.
    We use the inequality that for $x \in [0, 1]$, $1-e^{-x} \geq Kx$ (due to the concavity of $1-e^{-x}$).
    Thus, $R(u) \geq K p\delta^+(u)$.
    Applying this to Inequality (\ref{eq:key_inequality}),
    \[ P(u) \geq \max\paren{p(u), (1-2p(u)) K p\delta^+(u)}. \]
    Similar to the minimization of $P(u)$, computed in the construction of $S$, this yields a lower bound:
    \[ P(u) \geq \frac{K p\delta^+(u)}{1+2K p\delta^+(u)}. \]
    We seek a linear lower bound $C' p\delta^+(u)$. We minimize the ratio $\frac{K}{1+2K p\delta^+(u)}$. Since $p\delta^+(u) < 1$, the ratio is minimized as $p\delta^+(u) \to 1$. The minimum ratio is $\frac{K}{1+2K} = C_0$.
    So, for all $u \in U$, $P(u) \geq C_0 p\delta^+(u)$.
    Now we sum this probability over $U$:
    \[ 2\cdot \opt \geq \sum_{u \in U} P(u) \geq \sum_{u \in U} C_0 p\delta^+(u) = C_0 p \sum_{u \in U} \delta^+(u). \]

    Let $\abs{E(G[U])}$ be the number of edges in the induced subgraph $G[U]$. Every edge $(u,v)$ in $G[U]$ has both endpoints in $U$. In the permutation $\pi$, one vertex must precede the other. The edge is counted exactly once in the forward degree of the earlier endpoint. Thus,
    \[ \abs{E(G[U])} \le \sum_{u \in U} \delta^+(u) \leq \frac{2}{C_0} \cdot \frac{\opt}{p}. \]
    This proves the second part of the lemma with $C = \frac{2}{C_0} = 2\paren{\frac{e}{e-1} + 2}$.
\end{proof}

\begin{remark}[A Talagrand-based alternative]
\label{rem:talagrand-mvc}
One can also derive a weaker concentration bound for $\opt = \mathbb{E}[\abs{\MVC{\Gstar}}]$ via
Talagrand's inequality on the vertex-exposure.

Fix an arbitrary ordering $V=\{v_1,\dots,v_n\}$ of the vertices of $G$. For each
$i\in[n]$, let
\[
X_i = \bigl(\mathbf{1}\{(v_i,v_j)\in E(\Gstar)\}\bigr)_{j<i},
\]
so that $X_i$ records the realized edges from $v_i$ to the vertices preceding it
in the ordering. Since edges are realized independently, 
$X_1,\dots,X_n$ are mutually independent, and there exists a function
$f$ such that
\[
\abs{\MVC{\Gstar}} = f(X_1,\dots,X_n).
\]
As discussed in this section, the function $f$ is $1$-Lipschitz with respect to the Hamming metric on the
coordinates: changing a single coordinate $X_i$ only changes the realized edges
incident to the vertex $v_i$, and therefore can change the minimum vertex-cover
size by at most $1$. 

Moreover, the event $\{f(\cdot) \ge r\}$ is $2$-certifiable.
Indeed, if $\abs{\MVC{\Gstar}} \ge r$, then there exists a subgraph
$H\subseteq G^\star$ that is minimal subject to $\abs{\MVC{H}}\ge r$.
The classical bound by Erd\H{o}s and Gallai for such graphs implies that $\abs{V(H)}\le 2r$; see, e.g.,
\cite[Chapter 12.1]{lovasz2009matching} or \cite{gyarfas2021order,erdHos1961minimal}.
Since $H$ is inclusion-minimal with $\abs{\MVC{H}}\ge r$, we in fact have $\abs{\MVC{H}}=r$,
and therefore revealing the coordinates of the
vertices of $H$ certifies the event. In other words, $f$ is $2$-certifiable. 

Applying Talagrand's inequality for $c$-Lipschitz $s$-certifiable functions; see, e.g.,~\cite[Chapter 16.2]{molloy2002graph}, there exists absolute constants
$C,c>0$ such that for all $t\ge 0$,
\[
\Pr\bigl(\bigl| \abs{\MVC{\Gstar}} - \opt \bigr| \ge t + C\sqrt{\opt}\bigr)
    \le 4\exp\left(-\frac{c t^2}{\opt}\right).
\]

By a standard two-case argument, it also implies that for some absolute constant $c'>0$,
\[
\Pr\left(\bigl|\abs{\MVC{\Gstar}}-\opt\bigr|>t\right)
   \le 2\exp\left(-\frac{c't^2}{\opt}\right)
   \qquad\text{for all } t\in[0,\opt],
\]
which matches the grantee of \Cref{cor:mvc-concentration} up to a constant factor.
\end{remark}

\section{Concluding Remarks}In this paper, we gave the first algorithm for the stochastic vertex problem that obtain a $(1+\eps)$-approximation using $O_\eps(n/p)$ queries. This result is optimal, up to the dependence on $\eps$ in $O_\eps(\cdot)$, which for our algorithm is $1/\eps^5$. As in prior work, the set of edge queries in our algorithm is non-adaptive. Our algorithm is simple, and in some sense, the canonical algorithm for non-adaptive queries. A key tool in our analysis is a new concentration bound on the size of minimum vertex cover in random graphs, which might be of independent interest. While this result concludes the search for better stochastic vertex cover algorithms (modulo sharpening the dependence of the query bound on $\eps$), we believe that stochastic approximation for fundamental graph problems is an interesting domain for future research that is currently under-explored.\label{sec:conclusion}

\section*{Acknowledgments}
The authors thank the anonymous reviewer for pointing out the alternative approach in~\Cref{rem:talagrand-mvc} for proving the concentration bound for the size of a minimum vertex cover.

Miltiadis Stouras and Ola Svensson are supported by the Swiss State Secretariat for Education, Research and Innovation (SERI) under contract number MB22.00054.
Jan van den Brand is supported by NSF Award CCF-2338816 and CCF-2504994. 
Inge Li Gørtz is supported by Danish Research Council grant DFF-8021-002498.
Chirag Pabbaraju is supported by Gregory Valiant's and Moses Charikar's Simons Investigator Awards, and a Google PhD Fellowship.
Cliff Stein is supported in part  by NSF grant CCF-2218677, ONR grant ONR-13533312, and by the Wai T. Chang Chair in Industrial Engineering and Operations Research. Debmalya Panigrahi is supported in part by NSF grants CCF-2329230 and CCF-1955703. 

\bibliographystyle{alpha}
\bibliography{mvc-ref}

\appendix

\section{Handling the case of \texorpdfstring{$\opt = O\bigl(\log(1/\eps)/\eps^2\bigr)$}{}}\label{app:proofs-sec-4}\begin{lemma}\label{lem:opt-large-suffices}
For any universal constant \(C'>0\), if
\[
\opt \;<\; \frac{C'\,\log(1/\eps)}{\eps^2},
\]
then the total number of edges of \(G\) is \(O\!\bigl(n\,\log(1/\eps)/\eps^2\bigr)\). In particular, in this regime the algorithm can query the entire graph and return an exact solution. Consequently, it suffices to analyze the case
\[
\opt \;\ge\; \frac{C'\,\log(1/\eps)}{\eps^2}.
\]
\end{lemma}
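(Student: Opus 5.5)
The plan is to bound the total number of edges of $G$ directly using the structural Lemma~\ref{lem:structural}. One caveat about the wording: the bound I will prove is $O\bigl(n\log(1/\eps)/(\eps^2 p)\bigr)$. The factor $1/p$ must be read as implicit in the $O(\cdot)$, as in the introductory remark of Section~\ref{sec:vertex-seed}, which states the bound as $O(n/(p\eps^3))$. Without this factor the claim fails: if $G$ is complete and $p$ is tiny, then $\opt$ is tiny while $|E|=\binom n2$. The $1/p$-dependent bound is exactly what is needed, because the budget in Problem~\eqref{problem:best-S} also scales as $1/p$.

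\textbf{Step 1 (structural decomposition).} Apply Lemma~\ref{lem:structural} to $G$ and $p$. This yields a set $S\subseteq V$ with $|S|\le c\cdot\opt$ such that $G[V\setminus S]$ has at most $C\cdot\opt/p$ edges, where $c=C<8$.

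\textbf{Step 2 (count all edges).} Every edge of $G$ either lies inside $G[V\setminus S]$ or has an endpoint in $S$. The second kind numbers at most $|S|\cdot n\le c\cdot\opt\cdot n$. Hence
\[
|E| \;\le\; c\,n\,\opt + C\,\frac{\opt}{p} \;\le\; (c+C)\,\frac{n\,\opt}{p} \;<\; 16\,\frac{n\,\opt}{p},
\]
using $p\le 1$ and $n\ge 1$. Plugging in $\opt< C'\log(1/\eps)/\eps^2$ gives
\[
|E| \;<\; 16\,C'\,\frac{n\log(1/\eps)}{\eps^2 p} \;=\; O\!\left(\frac{n\log(1/\eps)}{\eps^2 p}\right),
\]
which is the claimed total-edge bound.

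\textbf{Step 3 (exact solution).} For $\eps<c$ small enough we have $16C'\log(1/\eps)/\eps^2\le 2\cdot 10^3/\eps^5$, since $\log(1/\eps)\le 1/\eps^3$ eventually. So $\detsetvar=\emptyset$ satisfies the constraint of Problem~\eqref{problem:best-S}, and its objective value is $\E[|\MVC{\Gstar}|]=\opt$. Every feasible $\detsetvar$ yields a vertex cover of $\Gstar$, so no feasible $\detsetvar$ has objective below $\opt$. Therefore the optimum $\detset$ attains exactly $\opt$, and the \ouralg algorithm's output has expected size $\opt$. Indeed, with $\detset=\emptyset$ it queries all of $G$ and returns $\MVC{\Gstar}$. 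Consequently, for any approximation analysis it suffices to treat $\opt\ge C'\log(1/\eps)/\eps^2$.

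The only real obstacle is Step 1, namely controlling edges among low-forward-degree vertices by $\opt/p$ rather than by $n^2$. That work is already done by Lemma~\ref{lem:structural}, so the remaining steps are arithmetic.
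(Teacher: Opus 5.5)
Your proof is correct and follows the same route as the paper: apply Lemma~\ref{lem:structural}, bound the edges incident to $S$ by $n|S|$ and the edges of $G[V\setminus S]$ by the lemma, and observe that $\detsetvar=\emptyset$ is then feasible for Problem~\eqref{problem:best-S}. You are right to insist on the $1/p$ factor, and your complete-graph example shows the bound fails without it: the paper's appendix writes $O(\opt)$ for the edges of $G[V\setminus S]$ where the lemma actually gives $C\,\opt/p$, and the intended bound is the $O(n/(p\eps^3))$ stated at the start of Section~\ref{sec:vertex-seed}.
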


\begin{proof}[Proof of Lemma~\ref{lem:opt-large-suffices}]
Suppose that $\opt \;<\; \frac{C'\,\log(1/\eps)}{\eps^2}$. By the Lemma~\ref{lem:structural}, there exists a set \(S\subseteq V\) with \(|S|=\Theta(\opt)\) such that the induced subgraph \(G[V\setminus S]\) has \(O(\opt)\) edges. Every remaining edge of \(G\) is incident to at least one vertex of \(S\), hence the number of such edges is at most \(n\cdot |S|=O(n\cdot\opt)\). Therefore
\[
|E(G)| \;\le\; n|S| + O(\opt) \;=\; O(n\cdot\opt) \;=\; O\!\Bigl(n\,\frac{\log(1/\eps)}{\eps^2}\Bigr).\qedhere
\]
\end{proof}

\end{document}